\PassOptionsToPackage{dvipsnames}{xcolor}
\NewCommandCopy{\LaTeXtextbf}{\textbf}
\documentclass{ieeeaccess}
\usepackage[T1]{fontenc}
\usepackage{graphicx}
\usepackage{amsmath,amssymb,amsfonts}
\usepackage{textcomp}
\usepackage{tikz}
\NewSpotColorSpace{PANTONE}
  \AddSpotColor{PANTONE} {PANTONE3015C} {PANTONE\SpotSpace 3015\SpotSpace C} {1 0.3 0 0.2}
  \SetPageColorSpace{PANTONE}%
\usepackage{bm}
\makeatletter
\AtBeginDocument{\DeclareMathVersion{bold}
\SetSymbolFont{operators}{bold}{T1}{times}{b}{n}
\SetSymbolFont{NewLetters}{bold}{T1}{times}{b}{it}
\SetMathAlphabet{\mathrm}{bold}{T1}{times}{b}{n}
\SetMathAlphabet{\mathit}{bold}{T1}{times}{b}{it}
\SetMathAlphabet{\mathbf}{bold}{T1}{times}{b}{n}
\SetMathAlphabet{\mathtt}{bold}{OT1}{pcr}{b}{n}
\SetSymbolFont{symbols}{bold}{OMS}{cmsy}{b}{n}
\renewcommand\boldmath{\@nomath\boldmath\mathversion{bold}}}
\makeatother
\usepackage{orcidlink}
\usepackage[hyperfirst=false]{glossaries}
\usepackage{tablefootnote}
\usepackage{amsmath,amssymb,amsthm}
\usepackage{mathtools,mathrsfs,bm}
\usepackage{booktabs,multirow,threeparttable,tabularx}
\usepackage{algorithm,algpseudocode}
\usepackage{tikz}
\usetikzlibrary{shapes,arrows,positioning,fit,backgrounds,arrows.meta}
\usepackage{caption,subcaption,enumitem}
\usepackage{cite}
\usepackage{hyperref}
\usepackage[capitalise,noabbrev]{cleveref}
\usetikzlibrary{shapes,arrows,positioning,calc,arrows.meta,backgrounds}

\usepackage{tikz}

\newcommand{\F}{\mathcal{F}}
\newcommand{\I}{\mathcal{I}}
\newcommand{\argmin}{\operatorname{argmin}}
\newcommand{\argmax}{\operatorname{argmax}}
\newcommand{\GED}{\operatorname{GED}}
\newcommand{\wt}{W_{\text{total}}}
\newcommand{\degr}{\operatorname{deg}}

\usepackage{cleveref}
\newtheorem{theorem}{Theorem}
\newtheorem{lemma}{Lemma}
\newtheorem{corollary}{Corollary}
\newtheorem{proposition}{Proposition}
\newtheorem{definition}{Definition}
\theoremstyle{remark}

\newtheorem{heuristic}{Heuristic Rule}
\theoremstyle{plain}

\algnewcommand{\Break}{\textbf{break}}

\title{Graph Edit Distance Formulation for the Vehicle Routing Problem:
  Theory and Analysis}
\author{\uppercase{Adel Dabah}\authorrefmark{1}}

\address[1]{ \textit{J\"{u}lich Supercomputing Centre} \textit{Forschungszentrum J\"{u}lich}  J\"{u}lich, Germany\\ (e-mail: a.dabah@fz-juelich.de)}

\corresp{Corresponding author: Adel Dabah (e-mail:  a.dabah@fz-juelich.de).}

\begin{document}
\doi{10.1109/ACCESS.2026.xxx}
\begin{abstract}
We show that the Vehicle Routing Problem (VRP) can be reformulated as a Graph Edit Distance (GED) maximization problem. Under a simple edge‑deletion cost model, minimizing total route cost is equivalent to maximizing the total weight of edges deleted from the complete instance graph. This formulation models VRP at the edge level, where solutions are defined by selected edges rather than route sequences, enabling structural analyses that are difficult in classical formulations: per‑edge attribution of solution quality, decomposition of the optimality gap, characterization of solution sparsity, and identification of edges that are hard to reach by greedy construction.

Theoretically, we establish a merge‑decomposition theorem showing that Clarke‑Wright savings equal per‑merge GED increments, and an approximation‑transfer theorem that turns GED approximation ratios into VRP cost bounds. Using this reformulation, we analyze 90 CVRP benchmark instances with known optimal solutions. We find that optimal routing graphs use only 5.5\% of available edges, that approximately 3.0\% of optimal edges are consistently not found by Clarke‑Wright heuristics under repeated restarts, and that the cost gap decomposes into missed optimal edges and substituted non‑optimal edges of comparable total weight. The edge‑additive objective provides a natural per‑edge supervision signal for future graph neural network approaches to edge prediction, suggesting a  potential connection to graph neural network approaches that we leave for follow-up work.
\end{abstract}

\begin{keywords}
Vehicle Routing Problem, Graph Edit Distance, Combinatorial Optimization, Graph Theory.
\end{keywords}
\maketitle

\section{Introduction}
\IEEEPARstart{T}{he} Vehicle Routing Problem (VRP) is a central combinatorial optimization problem with applications in logistics, transportation, and service systems~\cite{toth2002vehicle}. Despite decades of progress and highly effective heuristics, VRP remains computationally challenging, with most variants classified as NP-hard~\cite{lenstra1981complexity}. The standard route-based formulation provides limited visibility into the edge-level structure of solutions, making it difficult to analyze how individual edges contribute to solution quality or how heuristic solutions deviate from optimal ones.

Concurrently, Graph Edit Distance (GED) has emerged as a powerful framework in pattern recognition and machine learning for measuring similarity/dissimilarity between graphs~\cite{bunke1999error}. GED is defined as the minimum cost set of edit operations (node/edge insertions, deletions, and substitutions) required to transform one graph into another. While GED itself is NP-hard~\cite{bunke2011recent}, it has efficient approximation algorithms and has found applications in chemical informatics, computer vision, and social network analysis.

Traditional graph-based VRP formulations~\cite{toth2002vehicle,laporte1992vehicle,vidal2013metaheuristics} model the instance as a complete weighted graph and solutions as sets of routes (cycles). In this work, we introduce an edge-centric perspective based on a GED formulation, by reformulating the VRP objective as a GED maximization problem. This reformulation shifts the focus from route construction to edge selection, enabling structural analyses of solution quality at the level of individual edges.  While similar edge-level analyses can in principle be derived from classical VRP solutions, the GED formulation makes them directly and naturally available through the additive structure of the objective, without requiring separate post-processing steps. 

The classical VRP objective $C(\mathcal{R}) = \sum_k \sum_i w(v_i, v_{i+1})$ is path-aggregated: each edge's contribution depends on its position in a route sequence. The GED objective $\sum_{e \in E \setminus E_t} w(e)$ is edge-additive: each edge contributes independently. This distinction enables per-edge attribution of solution quality, per-edge supervision for learning, and direct application of GED approximation bounds with known cost guarantees.

This paper provides the following contributions:
\begin{itemize}
\item A formal equivalence between minimizing VRP route cost and maximizing GED (\cref{thm:main}), including a cost-preserving correspondence between VRP solutions and feasible routing graphs (\cref{lem:correspondence}).
\item A merge-decomposition theorem showing Clarke--Wright savings are exactly per-merge GED increments (\cref{thm:merge}), and an approximation-transfer theorem providing VRP cost bounds from GED approximation ratios (\cref{thm:approximation}).
\item Using this reformulation, we conduct an edge-level structural analysis of optimal solutions on 90 benchmark instances, revealing sparsity (5.5\% of edges used), reachability limits (3.0\% not found by greedy construction under tested restarts), and a decomposition of the cost gap into missed vs.\ wasted edges.
\item A proposed path from GED to learned edge prediction via GNNs, where the edge-additive objective suggests a natural per-edge loss function and the reachability results identify supervision targets. No GNN experiments are included; this direction is left for future work
\end{itemize}

As a proof of concept, we use a standard Clarke--Wright multi-start heuristic solely to generate solutions for analysis. The goal is not to propose a new state-of-the-art solver, but to demonstrate how the proposed reformulation reveals structural properties of VRP solutions.

The remainder of this paper is organized as follows: \cref{sec:prelim} reviews background on VRP and GED; \cref{sec:formulation} presents the equivalence formulation; \cref{sec:proof} provides the formal theorem and proof; \cref{sec:experiments} presents experimental validation with structural analysis; \cref{sec:ged-perspective} develops the GED perspective on solution methods including edge frequency analysis and GNN connections; \cref{sec:related} reviews related work; and \cref{sec:conclusion} concludes the paper.
\section{Preliminaries}\label{sec:prelim}
In this section, we present basic definitions about both VRP and GED formulations.

\subsection{Vehicle Routing Problem (VRP)}\label{sec:vrp}

The Vehicle Routing Problem (VRP) was introduced by Dantzig and Ramser in 1959 to model the optimal delivery of goods from a central depot to geographically dispersed customers using a fleet of identical vehicles~\cite{dantzig1959truck}.  In its basic form, the problem asks for a set of vehicle routes, each starting and ending at the depot, such that every customer is visited exactly once, the total demand on any route does not exceed the vehicle capacity, and the sum of travel costs is minimized.

A standard VRP instance is defined by the tuple \(\I = (V,E,w,m,Q,D)\), where:
\begin{itemize}
    \item \(V = \{0,1,\ldots,n\}\) is the vertex set, with vertex \(0\) representing the depot and vertices \(\{1,\ldots,n\}\) representing customers.
    \item \(E = \{(i,j) \mid i,j \in V, i \neq j\}\) is the complete edge set. In this paper we use a \textbf{directed} graph model: edges are ordered pairs, but weights are symmetric \(w(i,j)=w(j,i)\).
    \item \(w: E \to \mathbb{R}^{+}\) is the non-negative cost (distance, time) function satisfying the triangle inequality: 
    \[w(i,j) \leq w(i,k) + w(k,j) \quad \forall i,j,k \in V\]
    \item \(m\in \mathbb{Z}^{+}\) is the number of identical vehicles available.
    \item \(Q\in \mathbb{R}^{+}\) is the vehicle capacity.
    \item \(D = \{d_1,\ldots,d_n\}\) are customer demands with \(d_i \leq Q\) for all \(i \in \{1,\ldots,n\}\).
\end{itemize}

The objective is to find a set of routes that satisfies all constraints and minimizes the total distance traveled. Despite its simple description, the VRP is NP-hard, and most of its variants remain computationally challenging even for moderate instance sizes.

\begin{definition}
\label{def:vrp-solution}
A feasible solution \(\mathcal{R} = \{R_1,\ldots,R_m\}\) consists of \(m\) routes, where each route \(R_k = (v_0^k, v_1^k, \ldots, v_{l_k}^k)\) satisfies:
\begin{enumerate}
    \item \(v_0^k = v_{l_k}^k = 0\) (start and end at depot).
    \item \(\{v_1^k,\ldots,v_{l_k-1}^k\} \subseteq \{1,\ldots,n\}\) (intermediate nodes are customers).
    \item \(\bigcup_{k=1}^m \{v_1^k,\ldots,v_{l_k-1}^k\} = \{1,\ldots,n\}\) (all customers served).
    \item \(\{v_1^k,\ldots,v_{l_k-1}^k\} \cap \{v_1^{k'},\ldots,v_{l_{k'}-1}^{k'}\} = \emptyset\) for \(k \neq k'\) (customers served exactly once).
    \item \(\sum_{i \in R_k} d_i \leq Q\) for all \(k\) (capacity constraint).
\end{enumerate}
The objective is to minimize total route cost:
\begin{equation}
\min_{\mathcal{R}} \sum_{k=1}^m \sum_{i=0}^{l_k-1} w(v_i^k, v_{i+1}^k) \label{eq:vrp-objective}
\end{equation}
\end{definition}

\subsection{Graph Edit Distance (GED)}\label{sec:ged}

Graph Edit Distance (GED) provides a flexible measure of similarity between graphs. A graph \(G = (V, E)\) consists of a set of vertices \(V\) and a set of edges \(E \subseteq V \times V\). In many applications, vertices and edges may carry labels or attributes; we denote by \(\ell(v)\) and \(\ell(e)\) the labels of a vertex \(v\) and an edge \(e\), respectively. These labels can represent, for example, identifiers, colors, or numerical attributes.

The edit operations that transform one graph into another are:
\begin{itemize}
    \item \textbf{Vertex deletion}: remove a vertex and all its incident edges.
    \item \textbf{Vertex insertion}: add a new vertex, optionally with a label.
    \item \textbf{Vertex substitution}: replace the label of a vertex with another label.
    \item \textbf{Edge deletion}, \textbf{edge insertion}, and \textbf{edge substitution}: analogous operations for edges.
\end{itemize}
Each operation has an associated non‑negative cost \(c(\cdot)\). An edit path is a sequence of such operations that transforms a source graph \(G_1\) into a target graph \(G_2\). The total cost of an edit path is the sum of the costs of its constituent operations.

The Graph Edit Distance between \(G_1\) and \(G_2\) is defined as the minimum cost over all edit paths that transform \(G_1\) into \(G_2\):

\begin{equation}
\GED(G_1, G_2) = \min_{e \in \Gamma(G_1, G_2)} \sum_{op \in e} c(op) \label{eq:ged}
\end{equation}
where \(\Gamma(G_1, G_2)\) is the set of all edit paths transforming \(G_1\) into \(G_2\).

When the vertex sets of the two graphs correspond one‑to‑one (as will be the case in our VRP formulation), vertex edit operations are unnecessary, and GED reduces to a problem on edges only. In that setting, for edge sets \(E_1\) and \(E_2\) we have the simplified expression
\begin{equation}
\GED(G_1, G_2) = \sum_{e \in E_1 \setminus E_2} c_{\text{del}}(e) + \sum_{e \in E_2 \setminus E_1} c_{\text{ins}}(e) \label{eq:ged-simple}
\end{equation}
where \(c_{\text{del}}(e)\) and \(c_{\text{ins}}(e)\) are the costs of deleting or inserting an edge \(e\), respectively. This form is the foundation for our VRP–GED equivalence.

\section{VRP-GED Equivalence Formulation}\label{sec:formulation}

We now introduce two graph structures that capture the VRP instance and any feasible solution. Both graphs share the same vertex set \(V = \{0,1,\dots,n\}\), where vertex \(0\) is the depot and the remaining vertices represent customers.

\paragraph{Complete VRP graph \(G_s\).}
The source graph \(G_s\) encodes all possible travel connections. It is defined as the complete weighted directed graph on the vertex set \(V\) with edge weights given by the distance function \(w\):
\[
G_s = (V,\; E,\; w),\qquad E = \{(i,j) \mid i,j\in V,\ i\neq j\}.
\]
Thus \(G_s\) contains every potential directed edge that could be used in a vehicle route. Edge weights are symmetric: \(w(i,j)=w(j,i)\).

\paragraph{Routing graph \(G_t\).}
Given a feasible VRP solution \(\mathcal{R} = \{R_1,\ldots,R_m\}\), the target graph \(G_t\) captures exactly the directed edges that are actually traversed. It is defined as the subgraph of \(G_s\) induced by the union of all route edges:
\[
G_t = (V,\; E_t,\; w_t),
\]
where:
\begin{itemize}
    \item \(E_t = \bigcup_{k=1}^m \bigl\{(v_i^k, v_{i+1}^k) \mid i = 0,\ldots, l_k-1\bigr\}\) (each directed edge appears once),
    \item \(w_t(e) = w(e)\) for every \(e \in E_t\) (the weights are inherited from \(G_s\)).
\end{itemize}
Because the routes form directed cycles that start and end at the depot, every customer vertex appears with in-degree = out-degree = 1 in \(G_t\); the depot has in-degree = out-degree = \(m\). Hence \(G_t\) is a proper subgraph of \(G_s\) (\(E_t \subseteq E\)) that satisfies specific degree and connectivity conditions.

The following characterization summarises these necessary and sufficient conditions for a subgraph to correspond to a feasible VRP solution.

\begin{definition}[Feasible Routing Graph Characterization]\label{def:feasible-characterization}
A directed subgraph \(G_t = (V, E_t)\) of the complete graph \(G_s\) is a feasible routing graph if and only if:
\begin{enumerate}
    \item \(\degr^+(0)=\degr^-(0) = m\) (depot in-degree = out-degree = number of vehicles);
    \item \(\degr^+(i)=\degr^-(i) = 1\) for all \(i \in \{1,\ldots,n\}\) (every customer has degree exactly 1 in each direction);
    \item \(G_t\) decomposes into \(m\) edge-disjoint directed cycles, each starting and ending at the depot (the cycles share only the depot vertex);
    \item For each cycle \(C_k\) with customer set \(V_{C_k}\setminus\{0\}\), the total demand \(\sum_{i \in V_{C_k} \setminus \{0\}} d_i \leq Q\) (capacity constraint).
\end{enumerate}
\end{definition}

Note that $\mathcal{F}$ (or $\F$) denotes the set of \emph{feasible routing graphs}, i.e., subgraphs satisfying Definition~\ref{def:feasible-characterization}. When referring to the set of feasible VRP solutions (route lists), we use $\mathcal{F}_{\mathrm{VRP}}$. The correspondence of Lemma~\ref{lem:correspondence} identifies these two sets up to route-order and reversal symmetries.

\begin{lemma}[Cost-Preserving Correspondence]\label{lem:correspondence}
The mapping $\Phi:\mathcal{R}\mapsto G_t$ that sends a feasible VRP
solution to its routing graph (\cref{def:feasible-characterization})
satisfies:
\begin{enumerate}[label=(\alph*),nosep]
\item \emph{Cost preservation:} $C(\mathcal{R})=\sum_{e\in E_t}w(e)$ for every feasible $\mathcal{R}$.
\item \emph{Surjectivity onto $\F$:} every $G_t\in\F$ is $\Phi(\mathcal{R})$ for some feasible $\mathcal{R}$.
\item \emph{Quotient injectivity:} $\Phi(\mathcal{R})=\Phi(\mathcal{R}')$ if and only if $\mathcal{R}$ and $\mathcal{R}'$ differ by (i) a permutation of the route list and/or (ii) reversal of one or more cycles.
\end{enumerate}
Consequently $\Phi$ induces a bijection between the quotient set
$\F_{\mathrm{VRP}}/{\sim}$ (solutions modulo route-order and reversal
symmetries) and $\F$, and this bijection is cost-preserving.
\end{lemma}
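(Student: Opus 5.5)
The plan is to treat the three parts separately, using only \cref{def:vrp-solution} and \cref{def:feasible-characterization}. The same bookkeeping fact underlies all three: in a feasible $\mathcal{R}$, the multiset of traversed arcs has no repetitions. I will prove this first. Each customer occurs in exactly one route and exactly once in it (conditions 3--4). So any arc $(u,v)$ with at least one customer endpoint is determined by that customer's unique position, and its successor or predecessor there, and occurs once. An arc $(0,0)$ cannot occur, since routes contain at least one customer and $E$ has no loops. Hence the union defining $E_t$ is a disjoint union of the route arc sequences.

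\textbf{Part (a)} then follows by rewriting \eqref{eq:vrp-objective}: since no arc is counted twice, $C(\mathcal{R})=\sum_k\sum_i w(v_i^k,v_{i+1}^k)=\sum_{e\in E_t}w(e)$. I would also check here that $\Phi(\mathcal{R})\in\F$. The degree conditions 1--2 hold because each route contributes one out-arc and one in-arc at $0$ and at each of its customers. Condition 3 holds with the routes themselves as the cycles. Condition 4 is the capacity constraint, copied over.

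\textbf{Part (b).} Take $G_t\in\F$. By condition 3 it has a decomposition into $m$ directed cycles through $0$ that share only the depot. Reading each cycle from its out-arc at $0$ until the return to $0$ gives a route $R_k$. Conditions 1--5 of \cref{def:vrp-solution} follow from the definition as follows:
\begin{itemize}
\item condition 2 of \cref{def:feasible-characterization} (in- and out-degree 1 at every customer) gives that every customer lies on exactly one cycle;
\item condition 4 gives the capacity constraint.
\end{itemize}
The union of the route arcs is exactly $E_t$, so $\Phi(\mathcal{R})=G_t$. One subtlety: a single-customer route $0\to i\to 0$ uses the two distinct arcs $(0,i),(i,0)$, so it is a legitimate directed $2$-cycle and needs no special treatment.

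\textbf{Part (c)} is where I expect the main obstacle. The ``if'' direction for permutations is immediate, since $E_t$ is a union and so ignores the order of the routes. Reversal is the delicate point. In the purely directed model, reversing a route with at least two customers replaces each arc $(u,v)$ by $(v,u)$ and so changes $E_t$; only its weight is unchanged, by symmetry of $w$. My approach is to make explicit that for (c)(ii) the routing graph is read up to orientation, i.e.\ $E_t$ is identified with its set of unordered pairs, equivalently the underlying undirected multigraph. Cost preservation survives this identification because $w(i,j)=w(j,i)$.

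With that convention, the ``only if'' direction goes as follows. Suppose $\Phi(\mathcal{R})=\Phi(\mathcal{R}')$. Removing the depot splits the graph into connected components, each a path whose two end customers are attached to $0$; a single-customer route $0\to i\to 0$ appears as the pair of parallel edges $\{0,i\}$. Each component is one route, and a path is determined up to its direction of traversal. Hence the routes of $\mathcal{R}'$ are those of $\mathcal{R}$ up to reordering and reversal. The uniqueness of this decomposition into depot-anchored paths is the key structural step. It holds because every customer has degree $2$ and the only branching vertex is $0$.

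The final bijection statement then follows formally from (a)--(c): $\Phi$ is constant on $\sim$-classes and distinguishes different classes by (c), it is onto $\F$ by (b), and it is cost-preserving by (a).
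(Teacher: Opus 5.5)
Your proposal is correct and follows the paper's proof essentially step for step: (a) from the pairwise distinctness of the route arcs, (b) by reading each depot-anchored cycle of the decomposition in condition~3 as a route, and (c) by treating permutations as trivial and handling reversal through an orientation-forgetting convention, which is the same symmetrization/canonical-orientation convention the paper invokes. Your depot-removal argument makes explicit the uniqueness of the cycle decomposition that the paper only cites as a standard Eulerian fact. One point to tighten: the identification must be with the undirected multigraph, not with the set of unordered pairs, since these are not equivalent. For a singleton route $0\to i\to 0$ the set of unordered pairs records $w(0,i)$ once instead of twice, which breaks cost preservation.
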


\begin{proof}
\textbf{(a) Cost preservation.}
By construction $E_t=\bigcup_k\{(v_i^k,v_{i+1}^k)\}$ as a multiset; in
the directed model these arcs are pairwise distinct (\cref{def:feasible-characterization}),
so the union is a true set. The route cost
$C(\mathcal{R})=\sum_k\sum_i w(v_i^k,v_{i+1}^k)$ then equals
$\sum_{e\in E_t}w(e)$ term-by-term.

\textbf{(b) Surjectivity.}
Let $G_t\in\F$. Conditions (1)--(3) of \cref{def:feasible-characterization}
give an Eulerian decomposition of $G_t$ into $m$ arc-disjoint directed
cycles $C_1,\ldots,C_m$ through the depot. Each $C_k$ written starting
at the depot, $C_k=(0,v_1^k,\ldots,v_{l_k-1}^k,0)$, is a route $R_k$;
condition (4) supplies the capacity constraint. Hence
$\mathcal{R}=\{R_1,\ldots,R_m\}$ is a feasible VRP solution with
$\Phi(\mathcal{R})=G_t$.

\textbf{(c) Quotient injectivity.}
If $\mathcal{R}'$ is obtained from $\mathcal{R}$ by permuting the route
list, the union $\bigcup_k E_{R_k}$ is unchanged, so
$\Phi(\mathcal{R}')=\Phi(\mathcal{R})$. If $\mathcal{R}'$ is obtained
by reversing route $R_k$, the route $R_k=(0,v_1^k,\ldots,v_{l_k-1}^k,0)$
becomes $(0,v_{l_k-1}^k,\ldots,v_1^k,0)$; the corresponding arc set
changes from $\{(v_i^k,v_{i+1}^k)\}$ to $\{(v_{i+1}^k,v_i^k)\}$, which
under the directed model is a \emph{different} arc set but encodes the
same undirected cycle, hence the same routing graph after symmetrization.
For the directed model with symmetric weights, both arc orientations are
identified with the same routing graph by the standard convention of
recording each cycle in its canonical (e.g., lex-smallest-first) orientation;
this is the symmetry quotient. Conversely, suppose
$\Phi(\mathcal{R})=\Phi(\mathcal{R}')=G_t$. The cycle decomposition of
$G_t$ (existence and uniqueness up to relabeling and reversal of
individual cycles is the standard Eulerian decomposition result for
graphs with prescribed in/out-degrees) shows that any two pre-images of
$G_t$ differ only by route-order and reversal.

The optimization equivalence of \cref{thm:main} respects this quotient:
since both the cost $C(\mathcal{R})$ and the routing graph $\Phi(\mathcal{R})$
are constant on each equivalence class, optimizing over $\F_{\mathrm{VRP}}$
and optimizing over $\F$ yield identical optimal values and identical
optimal classes.
\end{proof}

\begin{figure}[t]
\centering
\centering
\begin{tikzpicture}[
    node distance = 1.5cm,
    depot/.style = {circle, draw, fill=blue!20, minimum size=0.8cm},
    customer/.style = {circle, draw, fill=green!20, minimum size=0.8cm},
    edge/.style = {draw, -},
    selected/.style = {draw, ultra thick, red},
    deleted/.style = {draw, dashed, gray!50, thick},
    every node/.style = {font=\small}
]

\begin{scope}[xshift=0cm, yshift=0cm]
    \node[depot] (d1) at (0,2) {0};
    \node[customer] (c1) at (-1.5,0) {1};
    \node[customer] (c2) at (0,0.5) {2};
    \node[customer] (c3) at (1.5,-0.5) {3};
    
    \draw[gray!30, dashed] (d1) -- (c1);
    \draw[gray!30, dashed] (d1) -- (c2);
    \draw[gray!30, dashed] (d1) -- (c3);
    \draw[gray!30, dashed] (c1) -- (c2);
    \draw[gray!30, dashed] (c2) -- (c3);
    \draw[gray!30, dashed] (c1) -- (c3);
    
    \node[below] at (0,-0.8) {$G_s$: Complete Graph};
    \node[below] at (0,-1.2) {(Source Graph)};
\end{scope}

\draw[->, thick] (1.5,1) -- (3,1) node[midway, above] {Edit Operations};

\begin{scope}[xshift=5cm, yshift=0cm]
    \node[depot] (d2) at (0,2) {0};
    \node[customer] (c4) at (-1.5,0) {1};
    \node[customer] (c5) at (0,0) {2};
    \node[customer] (c6) at (1.5,0) {3};
    
    \draw[selected] (d2) -- (c4) node[midway, left] {$w_{01}$};
    \draw[selected] (c4) -- (c5) node[midway, above] {$w_{12}$};
    \draw[selected] (c5) -- (c6) node[midway, above] {$w_{23}$};
    \draw[selected] (c6) -- (d2) node[midway, right] {$w_{30}$};
    
    \node[below] at (0,-0.8) {$G_t$: Routing Graph};
    \node[below] at (0,-1.2) {(Target Graph)};
\end{scope}

\begin{scope}[xshift=2.5cm, yshift=-3cm]
    \node[depot] (d3) at (1,1) {0};
    \node[customer] (c7) at (-1,-1) {1};
    \node[customer] (c8) at (1,-0.75) {2};
    \node[customer] (c9) at (3,-2) {3};
    
    \draw[deleted] (d3) -- (c8) node[midway, right] {$c_{\text{del}}=w_{02}$};
    \draw[deleted] (d3) -- (c9) node[midway, right] {$c_{\text{del}}=w_{03}$};
    \draw[deleted] (c7) -- (c9) node[midway, below] {$c_{\text{del}}=w_{13}$};
    
    \draw[gray!20] (d3) -- (c7);
    \draw[gray!20] (c7) -- (c8);
    \draw[gray!20] (c8) -- (c9);
    \draw[gray!20] (c9) -- (d3);
    
    \node[below] at (0,-2.5) {Edit Costs: $c_{\text{del}}(e) = w(e)$, $c_{\text{ins}}(e) = 0$};
\end{scope}
\node at (4, -6.9) {$\GED(G_s, G_t) = \displaystyle\sum_{e \in E \setminus E_t} w(e) = \wt - \displaystyle\sum_{e \in E_t} w(e)$};

\end{tikzpicture}

\caption{VRP--GED mapping.  Kept edges (blue, bold) form the route;
deleted edges (grey, dashed) contribute to GED.  The sum of deleted
edge weights equals $\wt-C(\mathcal{R})$.
For visual clarity each pair of arcs $(i,j),(j,i)$ in $G_s$ is drawn
as a single undirected stroke; the formal directed model of
\cref{def:feasible-characterization} contains both arcs of every such
pair, and the routing graph uses each traversed pair in exactly one
direction.}
\label{fig:vpr-ged-mapping}
\end{figure}

\subsection{Edit Cost Mapping and VRP-GED Objective Function Relation}\label{sec:cost-mapping}

The key element enabling the equivalence is the flexibility of assigning GED edit costs, which can be adapted to different VRP variants.  To transform the complete graph \(G_s\) into a feasible routing graph \(G_t\), we must delete every edge that is present in \(G_s\) but not in \(G_t\). Since \(E_t \subseteq E\), there are no insertions:

For edges \(e \in E\):
\begin{align}
c_{\text{del}}(e) &= w(e) \quad \text{for } e \in E \setminus E_t \label{eq:cost-del}\\
c_{\text{ins}}(e) &= 0 \quad \text{for } e \in E_t \setminus E \label{eq:cost-ins}
\end{align}

The deletion cost equals the original edge weight, penalizing the omission of an edge.

\begin{lemma}[Simplified GED for VRP]\label{lem:ged-simple}
With cost functions \eqref{eq:cost-del} and \eqref{eq:cost-ins}, the GED between \(G_s\) and \(G_t\) simplifies to:
\begin{equation}
\GED(G_s, G_t) = \sum_{e \in E \setminus E_t} w(e) \label{eq:ged-simple-vrp}
\end{equation}
\end{lemma}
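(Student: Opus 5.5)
The plan is to start from the edge-only expression \eqref{eq:ged-simple} for the graph edit distance. That expression applies because $G_s$ and $G_t$ share the vertex set $V$ under the identity correspondence. Taking $G_1=G_s$ (edge set $E$) and $G_2=G_t$ (edge set $E_t$), we get
\[
\GED(G_s,G_t)=\sum_{e\in E\setminus E_t}c_{\text{del}}(e)+\sum_{e\in E_t\setminus E}c_{\text{ins}}(e).
\]
The two sums are then handled separately.

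\textbf{Insertion sum.} Since $G_t$ is a subgraph of $G_s$, we have $E_t\subseteq E$, so $E_t\setminus E=\emptyset$. The insertion sum is therefore empty and equals $0$. This holds whatever the insertion cost is, and in any case \eqref{eq:cost-ins} sets it to $0$.

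\textbf{Deletion sum.} By \eqref{eq:cost-del}, each term $c_{\text{del}}(e)$ with $e\in E\setminus E_t$ equals $w(e)$. Substituting gives \eqref{eq:ged-simple-vrp}.

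\textbf{Justifying the edge-only formula.} The main point needing care is why \eqref{eq:ged-simple} is legitimate here, i.e.\ why no cheaper edit path exists. I would argue in three steps.
\begin{enumerate}
\item Vertex operations are unnecessary. Both graphs have the same labeled vertex set, and we fix the identity vertex mapping, so vertex substitutions cost nothing.
\item Any edit path under this mapping must delete every edge of $E\setminus E_t$. Each such deletion is charged $w(e)\ge 0$.
\item Any additional operations have nonnegative cost, so they can only increase the total. Hence the minimum over edit paths is attained by the path consisting of exactly these deletions.
\end{enumerate}
I would treat this fixed-correspondence setting as the paper's convention, stated in \cref{sec:ged}.

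As a corollary, I would also record the identity $\GED(G_s,G_t)=\wt-\sum_{e\in E_t}w(e)$, where $\wt=\sum_{e\in E}w(e)$. It follows from splitting $\wt$ over the disjoint sets $E_t$ and $E\setminus E_t$. Combined with \cref{lem:correspondence}(a), this gives $\GED(G_s,G_t)=\wt-C(\mathcal{R})$, which sets up \cref{thm:main}.
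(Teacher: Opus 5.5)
Your proposal is correct and follows the same route as the paper. You substitute $G_1=G_s$, $G_2=G_t$ into \eqref{eq:ged-simple}, note that $E_t\subseteq E$ makes the insertion sum empty, and use \eqref{eq:cost-del} to turn the deletion sum into $\sum_{e\in E\setminus E_t}w(e)$. Your extra argument is sound: under the fixed identity vertex correspondence every edit path must delete each arc of $E\setminus E_t$, and any further operations only add nonnegative cost. This justifies the edge-only formula, which the paper simply carries over from \cref{sec:ged} without proof.
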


\begin{proof}
Since \(E_t \subseteq E\) by definition, we have \(E_t \setminus E = \emptyset\), so the insertion term in \eqref{eq:ged-simple} vanishes. The deletion term becomes \(\sum_{e \in E \setminus E_t} w(e)\) by definition of \(c_{\text{del}}\) in \eqref{eq:cost-del}.
\end{proof}

As for objective relationship, the VRP objective \eqref{eq:vrp-objective} is the sum of the weights of edges actually traversed:
\begin{equation}
C(\mathcal{R}) = \sum_{e \in E_t} w(e) \label{eq:vrp-edge}
\end{equation}

Because \(G_s\) is complete, the total weight of all directed edges is constant:
\begin{equation}
\wt = \sum_{e \in E} w(e) = \sum_{e \in E_t} w(e) + \sum_{e \in E \setminus E_t} w(e) \label{eq:total-weight}
\end{equation}

From \eqref{eq:ged-simple-vrp} and \eqref{eq:total-weight}, we obtain the fundamental relation:
\begin{equation}
C(\mathcal{R}) = \wt - \GED(G_s, G_t) \label{eq:core-relation}
\end{equation}

Thus, minimizing VRP objective function (\(C(\mathcal{R})\)) is equivalent to maximizing \(\GED(G_s, G_t)\). Since \(\wt\) is constant, the VRP reduces to finding a feasible routing graph that maximizes the sum of weights of the edges that are \emph{not} used, as depicted in Figure~\ref{fig:vpr-ged-mapping}.

\section{Main Theorem and Proof}\label{sec:proof}
\begin{theorem}[VRP-GED Equivalence]\label{thm:main}
Let \(\I = (V,E,w,m,Q,D)\) be a VRP instance with complete graph \(G_s = (V,E,w)\). Let \(\F\) be the set of feasible routing graphs corresponding to feasible VRP solutions. Define GED with costs as in \eqref{eq:cost-del} and \eqref{eq:cost-ins}. Then:
\begin{equation}
\min_{\mathcal{R} \in \F} C(\mathcal{R}) = \wt - \max_{G_t \in \F} \GED(G_s, G_t) \label{eq:main-theorem}
\end{equation}
Equivalently, the optimization problems are identical up to an additive constant and sign change.
\end{theorem}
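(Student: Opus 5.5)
The plan is to reduce the theorem to the pointwise identity \eqref{eq:core-relation}, and then pass it through the correspondence of \cref{lem:correspondence}. Note that the left-hand side of \eqref{eq:main-theorem} is, strictly speaking, a minimum over feasible VRP solutions $\mathcal{R}\in\F_{\mathrm{VRP}}$, while the right-hand side is a maximum over routing graphs $G_t\in\F$. I will read the statement this way and let $\Phi$ mediate between the two index sets. Both sets are finite: $\F\subseteq 2^{E}$, and $\F_{\mathrm{VRP}}$ is finite because routes visit each customer once. I will assume an instance is feasible, so that $\F\neq\emptyset$. Then both extrema exist and are attained.

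\textbf{Step 1 (pointwise identity).} Fix a feasible $\mathcal{R}$ and let $G_t=\Phi(\mathcal{R})$. Part (a) of \cref{lem:correspondence} gives $C(\mathcal{R})=\sum_{e\in E_t}w(e)$. \Cref{lem:ged-simple} gives $\GED(G_s,G_t)=\sum_{e\in E\setminus E_t}w(e)$. Since $E_t\subseteq E$, the sets $E_t$ and $E\setminus E_t$ partition $E$, and \eqref{eq:total-weight} yields $C(\mathcal{R})=\wt-\GED(G_s,\Phi(\mathcal{R}))$, where $\wt$ does not depend on $\mathcal{R}$.

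\textbf{Step 2 (passing to extrema).} Taking the minimum over $\mathcal{R}$ gives
\[
\min_{\mathcal{R}} C(\mathcal{R})=\wt-\max_{\mathcal{R}}\GED(G_s,\Phi(\mathcal{R})),
\]
because $x\mapsto \wt-x$ is order-reversing. Next I need $\max_{\mathcal{R}}\GED(G_s,\Phi(\mathcal{R}))=\max_{G_t\in\F}\GED(G_s,G_t)$, which holds once the image of $\Phi$ equals $\F$. The inclusion $\Phi(\F_{\mathrm{VRP}})\subseteq\F$ holds because routing graphs of feasible solutions satisfy conditions (1)--(4) of \cref{def:feasible-characterization}, as noted just before that definition. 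The reverse inclusion is surjectivity, part (b) of \cref{lem:correspondence}. Hence the two maxima range over the same set of values.

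\textbf{Step 3 (equivalence of optimizers).} For the ``equivalently'' clause I will show that $\mathcal{R}^*$ is VRP-optimal if and only if $\Phi(\mathcal{R}^*)$ maximizes GED over $\F$. This follows from the pointwise identity together with surjectivity. By part (c) of \cref{lem:correspondence}, optimal classes modulo route order and reversal correspond bijectively to optimal routing graphs.

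The main obstacle is not computational but definitional: making sure the image of $\Phi$ is exactly $\F$. One delicate point is that distinct arcs in $E_t$ are genuinely distinct. Another is the reversal convention in the directed model, since a reversed route gives a different arc set with the same weight because $w$ is symmetric. I will handle this by observing that reversal preserves both $C$ and the GED value, so it cannot change either extremum. Only surjectivity and cost preservation are needed for the equality of optimal values, so injectivity subtleties do not affect \eqref{eq:main-theorem}.
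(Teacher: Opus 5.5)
Your proposal is correct and follows essentially the same route as the paper's proof: establish the pointwise identity $C(\mathcal{R})=\wt-\GED(G_s,\Phi(\mathcal{R}))$ from cost preservation, \cref{lem:ged-simple} and the partition \eqref{eq:total-weight}, then pass to extrema through the correspondence of \cref{lem:correspondence}. Your version is a bit tighter than the paper's, which writes $\Phi^{-1}(G_t)$, because you use only cost preservation, $\Phi(\F_{\mathrm{VRP}})\subseteq\F$ and surjectivity; you are right that the reversal and injectivity issues of part (c) play no role in \eqref{eq:main-theorem}.
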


\begin{proof}
The proof proceeds in three parts:

\textbf{Part 1: Solution Correspondence.} By Lemma~\ref{lem:correspondence}, there exists a cost-preserving correspondence between the set of feasible VRP solutions (modulo route order and reversal) and the set of feasible routing graphs. This establishes that optimizing over one set is equivalent to optimizing over the other.

\textbf{Part 2: Objective Transformation.} For any \(G_t \in \F\) with corresponding solution \(\mathcal{R} = \Phi^{-1}(G_t)\), Lemma~\ref{lem:ged-simple} gives:
\begin{align}
\GED(G_s, G_t) &= \sum_{e \in E \setminus E_t} w(e) \label{eq:proof-1}
\end{align}
From \eqref{eq:total-weight}, we have:
\begin{align}
\sum_{e \in E_t} w(e) &= \wt - \sum_{e \in E \setminus E_t} w(e) = \wt - \GED(G_s, G_t) \label{eq:proof-2}
\end{align}
But \(\sum_{e \in E_t} w(e) = C(\mathcal{R})\) by \eqref{eq:vrp-edge}. Therefore:
\begin{align}
C(\mathcal{R}) = \wt - \GED(G_s, G_t) \label{eq:proof-3}
\end{align}

\textbf{Part 3: Optimization Equivalence.} Since \(\wt\) is constant with respect to the optimization:
\begin{align}
\min_{\mathcal{R} \in \F} C(\mathcal{R}) &= \min_{\mathcal{R} \in \F} (\wt - \GED(G_s, \Phi(\mathcal{R}))) \nonumber \\
&= \wt - \max_{\mathcal{R} \in \F} \GED(G_s, \Phi(\mathcal{R})) \nonumber \\
&= \wt - \max_{G_t \in \F} \GED(G_s, G_t) \label{eq:proof-4}
\end{align}
where the last equality uses the correspondence from Lemma~\ref{lem:correspondence} to replace optimization over \(\mathcal{R}\) with optimization over \(G_t\).
\end{proof}

\noindent
In summary, the theorem is established through a correspondence between VRP solutions and feasible routing graphs (Part 1), a transformation linking the VRP cost to the GED via the constant total edge weight (Part 2), and the affine relationship that turns minimization of cost into maximization of GED (Part 3).

The value of \cref{thm:main} lies not in the algebraic identity, but in the structural change it induces.  The classical VRP objective $C(\mathcal{R}) = \sum_k \sum_i w(v_i^k, v_{i+1}^k)$ is \emph{path-aggregated}: each edge's contribution to cost depends on its position within a route sequence, and the objective cannot be decomposed over individual edges without reference to the route structure.  In contrast, the GED objective $\sum_{e \in E \setminus E_t} w(e)$ is \emph{edge-additive}: each edge contributes independently, regardless of route context.  This enables three capabilities that require non-trivial post-processing outside the standard VRP framework:
\begin{enumerate}[nosep]
\item \emph{Per-edge quality attribution}: determining each edge's individual contribution to the optimality gap (\cref{sec:exp-decomp}).
\item \emph{Per-edge supervision for learning}: defining a well-posed loss function over individual edges for GNN training (\cref{sec:gnn}).
\item \emph{Approximation transfer}: translating edge-level approximation guarantees into cost-level guarantees via \cref{thm:approximation}.
\end{enumerate}

\begin{corollary}[Solution Set Equivalence]\label{cor:solution-set}
The optimal solution sets are identical under the correspondence \(\Phi\):
\begin{equation}
\Phi\left(\argmin_{\mathcal{R} \in \F} C(\mathcal{R})\right) = \argmax_{G_t \in \F} \GED(G_s, G_t)
\end{equation}
\end{corollary}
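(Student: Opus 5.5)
The plan is to prove the set equality by double inclusion, using only the pointwise identity \eqref{eq:core-relation}, $C(\mathcal{R}) = \wt - \GED(G_s,\Phi(\mathcal{R}))$, together with the correspondence of \cref{lem:correspondence}. Throughout, the domain of the $\argmin$ is read as $\F_{\mathrm{VRP}}$, the set of route lists, as the notation convention after \cref{def:feasible-characterization} intends. The two optimal values are related by \cref{thm:main}: writing $C^\ast$ for the minimum cost and $\GED^\ast$ for the maximum GED, we have $C^\ast = \wt - \GED^\ast$.

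\textbf{Forward inclusion.} Let $\mathcal{R}^\ast$ be a minimizer of $C$. Then
\[
\GED(G_s,\Phi(\mathcal{R}^\ast)) = \wt - C(\mathcal{R}^\ast) = \wt - C^\ast = \GED^\ast .
\]
So $\Phi(\mathcal{R}^\ast)$ attains the maximum. It also lies in $\F$, since $\Phi$ maps feasible solutions to feasible routing graphs. Hence $\Phi(\mathcal{R}^\ast)$ belongs to the $\argmax$.

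\textbf{Reverse inclusion.} Let $G_t^\ast$ be a maximizer. By surjectivity (\cref{lem:correspondence}(b)) there is a feasible $\mathcal{R}$ with $\Phi(\mathcal{R}) = G_t^\ast$. Then $C(\mathcal{R}) = \wt - \GED^\ast = C^\ast$, so $\mathcal{R}$ is a minimizer and $G_t^\ast$ lies in the image of the $\argmin$.

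I would also note that, by \cref{lem:correspondence}(c) and part (a), the full preimage $\Phi^{-1}(G_t^\ast)$ is one symmetry class on which $C$ is constant. Thus every preimage is optimal, not just one. This gives the finer statement that the optimal classes in $\F_{\mathrm{VRP}}/{\sim}$ correspond bijectively to the optimal routing graphs.

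\textbf{Main obstacle.} The only delicate point is well-definedness under the reversal symmetry. In the strictly directed model, reversing a route changes the arc set, so $\Phi$ must be read through the canonical-orientation convention of \cref{lem:correspondence}(c). I would state explicitly that $\F$ is taken modulo this convention. Since reversal preserves cost when weights are symmetric, $C$ is constant on each class and neither inclusion is affected. Everything else is a routine consequence of the affine relation between the two objectives.
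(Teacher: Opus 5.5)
Your double-inclusion argument is correct and is the same approach as the paper: the paper gives no separate proof and treats the corollary as an immediate consequence of the pointwise identity $C(\mathcal{R})=\wt-\GED(G_s,\Phi(\mathcal{R}))$ from \cref{thm:main} together with the surjectivity part of \cref{lem:correspondence}, as stated at the end of that lemma's proof. One simplification: neither inclusion uses quotient injectivity, so the reversal/orientation subtlety you flag does not affect the set equality, and every preimage of an optimal routing graph is already optimal by cost preservation (part (a)) alone, without appeal to part (c).
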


Since the transformation between VRP and GED formulations is polynomial-time (requiring only computation of \(\wt\) in \(O(n^2)\) time), and standard VRP is NP-hard, the GED formulation with costs \eqref{eq:cost-del} and \eqref{eq:cost-ins} is also NP-hard.

\section{Extended Theoretical Results}\label{sec:extended}

\begin{theorem}[Merge $\Delta$GED]\label{thm:merge}
Merging routes \(R_a=[\ldots,i,0]\) and \(R_b=[0,j,\ldots]\) into
\([\ldots,i,j,\ldots]\) increases GED by the Clarke--Wright saving:
\begin{equation}\label{eq:merge}
\Delta\GED(i,j)=w(i,0)+w(0,j)-w(i,j)=s(i,j).
\end{equation}
(Note: The additive decomposition claim \(\GED(\mathcal{R})=\sum_k\GED(R_k)\) requires careful definition of route-level GED and is omitted; only the local merge identity is kept.)
\end{theorem}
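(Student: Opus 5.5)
The plan is to reduce the statement to a comparison of edge sets, using the fundamental relation \eqref{eq:core-relation}, $C(\mathcal{R})=\wt-\GED(G_s,G_t)$. Because $\wt$ depends only on the instance, any change in GED produced by a local modification of the solution equals the negative of the change in route cost:
\[
\Delta\GED=-\Delta C=\sum_{e\in E_t}w(e)-\sum_{e\in E_t'}w(e),
\]
where $E_t$ and $E_t'$ are the routing-graph arc sets before and after the merge. By Lemma~\ref{lem:correspondence}(a), both sums are the route costs of the corresponding solutions. So it suffices to identify exactly which arcs leave and which enter the routing graph when $R_a$ and $R_b$ are merged.

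First, I write $R_a=(0,\ldots,i,0)$ and $R_b=(0,j,\ldots,0)$, with $i$ the last customer of $R_a$ and $j$ the first customer of $R_b$. The merged route is $(0,\ldots,i,j,\ldots,0)$. All other routes are untouched, and so are all internal arcs of $R_a$ and $R_b$, including the arc leaving the depot at the start of $R_a$ and the arc entering the depot at the end of $R_b$. Hence
\[
E_t'=\bigl(E_t\setminus\{(i,0),(0,j)\}\bigr)\cup\{(i,j)\}.
\]

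Second, I verify that this is a genuine set identity and not a multiset one. The arc $(i,j)$ is not in $E_t$, because in $E_t$ the out-arc of customer $i$ is $(i,0)$, and condition 2 of Definition~\ref{def:feasible-characterization} gives $i$ out-degree exactly $1$. The two removed arcs are distinct, since $i\neq j$ (they lie on different routes). Consequently $E\setminus E_t'$ equals $E\setminus E_t$ with $(i,0)$ and $(0,j)$ added and $(i,j)$ removed. Applying Lemma~\ref{lem:ged-simple} to both graphs gives
\[
\Delta\GED=w(i,0)+w(0,j)-w(i,j)=s(i,j).
\]
This can be done either directly on the deleted-edge sums or via the cost difference above; the two computations agree.

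The main obstacle is the feasibility bookkeeping. For $\GED(G_s,G_t')$ to be evaluated within $\F$, the merged graph must satisfy Definition~\ref{def:feasible-characterization}. The degree conditions hold: customer degrees are preserved, and the depot loses one out-arc and one in-arc, giving degree $m-1$. Capacity holds provided $\sum_{R_a}d+\sum_{R_b}d\le Q$, which is the usual Clarke--Wright precondition. The depot degree drop means the vehicle count must be read as "at most $m$", or the identity must be understood as a statement about the edge-additive GED expression \eqref{eq:ged-simple-vrp}, which is defined for any subgraph. I would state this explicitly, so that the identity is read as a purely local, edge-additive computation and not as a claim relying on route-level GED additivity, in line with the note in the statement.
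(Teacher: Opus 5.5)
Your proof is correct and takes the same route as the paper's: the merge removes the arcs $(i,0)$ and $(0,j)$ from $E_t$ and adds $(i,j)$, so the deleted weight changes by $w(i,0)+w(0,j)-w(i,j)=s(i,j)$. Your extra checks are worth keeping, since the paper's one-line proof leaves them implicit: that $(i,j)\notin E_t$ by the out-degree condition on $i$, and that the merged graph has depot degree $m-1$, so the identity must be read through the edge-additive expression of Lemma~\ref{lem:ged-simple} rather than as a statement about graphs in $\F$.
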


\begin{proof}
After the merge, arcs \((i,0)\) and \((0,j)\) are deleted and \((i,j)\)
is kept; the net change in deleted weight is \(s(i,j)\).
\end{proof}

\cref{thm:merge} reveals that CW savings are exactly per-merge GED
increments: processing merges in descending savings order is
\emph{greedy GED maximisation}.

\begin{heuristic}[Route Elimination Priority]\label{cor:elim}
For each route $R_k$, define the net-GED score
$\mathrm{score}(R_k)=C(R_k)-\sum_{c\in R_k}\delta^*(c)$,
where $\delta^*(c)=\min_{R_j\neq R_k,\,\mathrm{feasible}}\Delta C(c\to R_j)$.
When excess routes must be eliminated, prioritize the route with the
highest score. This is a \emph{heuristic rule motivated by GED
semantics}, not a derived mathematical consequence; it does not
guarantee optimality but is effective in practice.
\end{heuristic}

\begin{corollary}[TSP Special Case]\label{cor:tsp}
When \(m=1\), \(\min_{\text{Hamiltonian cycle}}C(R)=\wt-\max_R\GED(G_s,G_R)\),
establishing GED as a dual TSP formulation.
\end{corollary}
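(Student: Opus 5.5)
The plan is to specialize \cref{thm:main} to the case \(m=1\) and then identify the feasible set \(\F\) with the set of Hamiltonian cycles on \(V\).

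\textbf{Step 1: feasible routing graphs are Hamiltonian cycles.} Take \(G_t\in\F\) with \(m=1\). By \cref{def:feasible-characterization}, conditions (1)--(2) give the depot in- and out-degree \(1\), and every customer also has in- and out-degree \(1\). Condition (3) says \(G_t\) is a single directed cycle through the depot. Since every vertex has positive degree, that one cycle must visit all of \(V=\{0,\ldots,n\}\) exactly once, so it is a directed Hamiltonian cycle.

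\textbf{Step 2: Hamiltonian cycles are feasible routing graphs.} Conversely, every directed Hamiltonian cycle \(H\) of \(G_s\) meets conditions (1)--(3). Condition (4) is the only real caveat. For the TSP reading I will assume the capacity is non-binding, namely \(\sum_{i=1}^n d_i\le Q\), which is the standard convention when one vehicle serves all customers. Otherwise \(\F\) would be empty, or only a subset of Hamiltonian cycles, and the statement would need that hypothesis made explicit. This step is the main obstacle, since the corollary as written silently assumes it. I will state it as a standing assumption, or equivalently take \(Q=\infty\).

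\textbf{Step 3: transfer the objective.} With \(\F\) equal to the set of Hamiltonian cycles, apply \cref{thm:main}. By \cref{lem:correspondence} each route \(R\) corresponds to its routing graph \(G_R=\Phi(R)\), with \(C(R)=\sum_{e\in E_R}w(e)\). Relation \eqref{eq:core-relation} gives \(C(R)=\wt-\GED(G_s,G_R)\) for every Hamiltonian cycle \(R\). Taking the minimum over \(R\) turns it into a maximum of \(\GED\), which is the claimed identity. The two orientations of a cycle have equal cost because \(w\) is symmetric, so the quotient by reversal in \cref{lem:correspondence}(c) does not affect the optimal value.

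The phrase ``dual formulation'' is interpretive and needs no further proof beyond this identity. By \cref{cor:solution-set}, the optimal tours are exactly the maximizers of \(\GED\).
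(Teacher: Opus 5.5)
Your proposal is correct and follows the paper's own (implicit) argument: the corollary is \cref{thm:main} specialized to $m=1$, with $\F$ identified as the set of directed Hamiltonian cycles. The capacity hypothesis can be dropped. With $m=1$, condition (4) is the same inequality $\sum_i d_i\le Q$ for every Hamiltonian cycle, so $\F$ is either all of them or empty, never a proper subset. More to the point, the pointwise identity $C(R)=\wt-\GED(G_s,G_R)$ from \cref{lem:ged-simple} and \eqref{eq:total-weight} holds for every Hamiltonian cycle whether or not it is capacity-feasible, so the min and the max range over the same set and the statement follows directly.
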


\begin{proposition}[Uncapacitated VRP]\label{prop:uncapacitated}
When capacity constraints are relaxed (\(Q \geq \sum_{i=1}^n d_i\)), the set \(\F\) of feasible routing graphs consists of all subgraphs satisfying degree constraints \(\degr^+(0)=\degr^-(0)=m\) and \(\degr^+(i)=\degr^-(i)=1\) for all customers \(i\), with exactly \(m\) edge-disjoint directed cycles each containing the depot.
\end{proposition}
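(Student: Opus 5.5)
The plan is to specialize \cref{def:feasible-characterization} to the case where condition~(4) becomes vacuous. I will then check that the remaining conditions are exactly the ones listed in \cref{prop:uncapacitated}. There are two inclusions to establish. First, every feasible routing graph satisfies the degree and cycle conditions. Second, every subgraph satisfying those conditions is a feasible routing graph, meaning it arises as $\Phi(\mathcal{R})$ for some feasible $\mathcal{R}$ by \cref{lem:correspondence}(b).

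\textbf{Step 1 (capacity is vacuous).} Take any subgraph satisfying conditions (1)--(3) of \cref{def:feasible-characterization}, and let $C_k$ be one of its cycles. The demands are non-negative, since $d_i\le Q$ and demands are nonnegative in the standard model; I would state this assumption explicitly. Hence
\[
\sum_{i\in V_{C_k}\setminus\{0\}} d_i \;\le\; \sum_{i=1}^n d_i \;\le\; Q .
\]
So condition~(4) holds automatically for every such cycle, and $\F$ is exactly the set of subgraphs satisfying conditions (1)--(3).

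\textbf{Step 2 (matching the stated description).} Conditions (1) and (2) are verbatim the degree constraints in the proposition. Condition~(3) says the graph decomposes into $m$ edge-disjoint directed cycles through the depot. I will observe that, given the degree constraints, this is equivalent to the phrase ``exactly $m$ edge-disjoint directed cycles each containing the depot.'' Each customer has in- and out-degree $1$, so the arc set splits into vertex-disjoint customer paths linked through the depot. The depot's out-degree $m$ then forces exactly $m$ such cycles once we know every cycle passes through $0$.

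\textbf{Step 3 (the main obstacle).} The one delicate point is excluding customer-only subtours. These are cycles avoiding the depot, which the degree conditions alone permit. The proposition's phrasing ``each containing the depot'' is precisely what rules them out. I would argue that a decomposition into $m$ depot cycles with depot out-degree $m$ uses all arcs, so no arc remains for a depot-free cycle. I would also note that the proposition, just like condition~(3), builds this connectivity requirement in rather than deriving it. With that settled, the two sets coincide, and \cref{lem:correspondence} gives the identification with uncapacitated VRP solutions.
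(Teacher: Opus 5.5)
Your proposal is correct and matches the paper's implicit approach: the paper states \cref{prop:uncapacitated} without proof, as the specialization of \cref{def:feasible-characterization} in which condition~(4) becomes vacuous, and your explicit nonnegativity assumption on the $d_i$ is genuinely needed for that step, since the paper only assumes $d_i\le Q$. One small correction in Step~3: customer-only subtours are excluded by the word ``decomposes'' (a partition of all of $E_t$ into depot cycles), not by the depot out-degree $m$, because such subtours use no depot arcs and so the degree count at $0$ cannot see them.
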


\subsection{Structural Properties of Optimal Solutions}

We define the complement notation $\overline{G_t}=(V,E\setminus E_t)$ for any feasible routing graph $G_t$; this notation is used throughout the remainder of the paper.

\label{rem:complement}
Let $G_t^*$ be an optimal routing graph maximizing $\GED(G_s,G_t)$.
Then its complement $\overline{G_t^*}=(V,E\setminus E_t^*)$ has
\emph{maximum total edge weight} among complements of feasible routing
graphs. This is an immediate algebraic consequence of \cref{thm:main}:
since $\sum_{e\in E_t}w(e)=\wt-\sum_{e\in E\setminus E_t}w(e)$ and
$\wt$ is constant, minimizing the kept weight is equivalent to
maximizing the deleted weight. We record it separately because the
complement view --- optimizing over \emph{unused} rather than used
edges --- motivates the analyses of \cref{sec:exp-decomp,sec:freq}.

\section{Connections to Classical Graph Problems}\label{sec:connections}

\subsection{GED Approximation Transfer}

\begin{theorem}[Approximation Transfer]\label{thm:approximation}
If there exists a polynomial-time algorithm that computes a solution \(\hat{G}_t\) such that:
\begin{equation}
\GED(G_s, \hat{G}_t) \geq \alpha \cdot \max_{G_t \in \F} \GED(G_s, G_t)
\end{equation}
for some \(\alpha \leq 1\), then the corresponding VRP solution \(\hat{\mathcal{R}} = \Phi^{-1}(\hat{G}_t)\) satisfies:
\begin{equation}
C(\hat{\mathcal{R}}) \leq \wt - \alpha \cdot (\wt - C(\mathcal{R}^*)) = (1-\alpha)\wt + \alpha C(\mathcal{R}^*)
\end{equation}
where \(\mathcal{R}^*\) is the optimal VRP solution.
\end{theorem}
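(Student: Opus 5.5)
The argument only needs the affine identity \eqref{eq:core-relation}, applied twice: once at the approximate routing graph $\hat{G}_t$ and once at an optimal one. We assume, as the statement implicitly does, that $\hat{G}_t \in \F$, so that Lemma~\ref{lem:correspondence}(b) supplies a feasible preimage $\hat{\mathcal{R}} = \Phi^{-1}(\hat{G}_t)$. Any representative of the quotient class will do, because cost is constant on classes.

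\textbf{Step 1: express the cost of the approximate solution.} Lemma~\ref{lem:correspondence}(a) together with \eqref{eq:core-relation} gives
\begin{equation*}
C(\hat{\mathcal{R}}) = \wt - \GED(G_s,\hat{G}_t).
\end{equation*}

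\textbf{Step 2: identify the optimum.} By Theorem~\ref{thm:main}, equivalently Corollary~\ref{cor:solution-set},
\begin{equation*}
\max_{G_t\in\F}\GED(G_s,G_t) = \wt - C(\mathcal{R}^*).
\end{equation*}

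\textbf{Step 3: combine.} Substituting the hypothesis $\GED(G_s,\hat{G}_t)\ge \alpha\,(\wt - C(\mathcal{R}^*))$ into Step~1 reverses the inequality because of the minus sign:
\begin{equation*}
C(\hat{\mathcal{R}}) \le \wt - \alpha\,(\wt - C(\mathcal{R}^*)).
\end{equation*}
Expanding the right-hand side gives $(1-\alpha)\wt + \alpha\, C(\mathcal{R}^*)$, which is the claimed bound.

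\textbf{Main point requiring care.} The algebra is routine, so the only real issue is making sure the output and the hypothesis are well posed. The output $\hat{G}_t$ must be a feasible routing graph, not just any subgraph, or $\Phi^{-1}$ is undefined; this is why we take $\hat{G}_t\in\F$ as an assumption. The inequality in the hypothesis is meaningful because $\wt - C(\mathcal{R}^*)\ge 0$: the routing graph is a subset of the complete graph and all weights are nonnegative.

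It is also worth recording what the bound says at the endpoints. For $\alpha = 1$ it gives $C(\hat{\mathcal{R}}) \le C(\mathcal{R}^*)$, so the solution is optimal. As $\alpha$ decreases, the bound degrades toward $\wt$, and since $\wt$ is typically $\Theta(n)$ times larger than $C(\mathcal{R}^*)$, the guarantee is weak unless $\alpha$ is close to 1.
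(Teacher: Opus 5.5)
Your proposal is correct and matches the paper's proof: apply $C(\hat{\mathcal{R}}) = \wt - \GED(G_s,\hat{G}_t)$, substitute the approximation hypothesis, and replace $\max_{G_t\in\F}\GED(G_s,G_t)$ by $\wt - C(\mathcal{R}^*)$ via Theorem~\ref{thm:main}. Your explicit assumption $\hat{G}_t\in\F$, which is needed for $\Phi^{-1}(\hat{G}_t)$ to be defined, is a useful clarification that the paper leaves implicit.
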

\noindent

Because \(\wt\) grows quadratically with \(n\), a GED ratio \(\alpha\) close to 1 does not guarantee a small absolute cost gap; the bound should be interpreted as a theoretical transfer rather than a practical guarantee.

\begin{proof}
From \eqref{eq:core-relation}, \(C(\hat{\mathcal{R}}) = \wt - \GED(G_s, \hat{G}_t)\). Using the approximation guarantee:
\begin{align}
C(\hat{\mathcal{R}}) &\leq \wt - \alpha \cdot \max_{G_t \in \F} \GED(G_s, G_t) \nonumber \\
&= \wt - \alpha \cdot (\wt - C(\mathcal{R}^*)) \nonumber \\
&= (1-\alpha)\wt + \alpha C(\mathcal{R}^*)
\end{align}
\end{proof}

\begin{corollary}[Constant Factor Approximation]\label{cor:approx}
If \(\alpha > 1 - \frac{1}{\rho}\) for some \(\rho > 1\), then \(C(\hat{\mathcal{R}}) \leq \rho C(\mathcal{R}^*)\) when \(\wt\) is bounded relative to \(C(\mathcal{R}^*)\).
\end{corollary}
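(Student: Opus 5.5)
The plan is to start from the bound of \cref{thm:approximation} and reduce the corollary to one explicit inequality between $\wt$ and $C(\mathcal{R}^*)$. That inequality is what ``$\wt$ bounded relative to $C(\mathcal{R}^*)$'' has to mean for the statement to hold. Concretely, \cref{thm:approximation} gives
\begin{equation*}
C(\hat{\mathcal{R}}) \leq (1-\alpha)\wt + \alpha C(\mathcal{R}^*),
\end{equation*}
so it suffices to show $(1-\alpha)\wt + \alpha C(\mathcal{R}^*) \leq \rho\, C(\mathcal{R}^*)$. This is equivalent to $(1-\alpha)\wt \leq (\rho-\alpha)C(\mathcal{R}^*)$.

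First I dispose of the degenerate case $\alpha = 1$. There the bound reads $C(\hat{\mathcal{R}}) \leq C(\mathcal{R}^*) \leq \rho C(\mathcal{R}^*)$ because $\rho>1$ and costs are non-negative, so nothing about $\wt$ is needed.

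For $\alpha<1$ I divide by $1-\alpha>0$. The target condition becomes
\begin{equation*}
\wt \leq \frac{\rho-\alpha}{1-\alpha}\, C(\mathcal{R}^*) = \Bigl(1 + \frac{\rho-1}{1-\alpha}\Bigr) C(\mathcal{R}^*),
\end{equation*}
using the rewriting $\rho-\alpha = (\rho-1)+(1-\alpha)$. The hypothesis $\alpha > 1-\frac{1}{\rho}$ is the same as $1-\alpha < \frac{1}{\rho}$, so $\frac{\rho-1}{1-\alpha} > \rho(\rho-1)$. Hence the single $\alpha$-independent condition
\begin{equation*}
\wt \leq \bigl(1+\rho(\rho-1)\bigr)\, C(\mathcal{R}^*)
\end{equation*}
suffices for every admissible $\alpha$, and I would adopt it as the precise reading of ``bounded''. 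Chaining the inequalities then gives $C(\hat{\mathcal{R}}) \leq \rho\, C(\mathcal{R}^*)$.

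The only real obstacle is interpretive rather than technical: the corollary's phrase ``bounded relative to $C(\mathcal{R}^*)$'' is not quantified. I would therefore state explicitly the sharp sufficient condition $\wt \leq \frac{\rho-\alpha}{1-\alpha} C(\mathcal{R}^*)$, together with the uniform version above. I would also remark, echoing the comment after \cref{thm:approximation}, that since $\wt$ grows quadratically in $n$ while $C(\mathcal{R}^*)$ typically grows more slowly, this condition restricts the instances to which the corollary meaningfully applies.
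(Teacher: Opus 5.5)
Your argument is correct and is essentially the intended one: the paper gives no proof of this corollary, treating it as an immediate consequence of \cref{thm:approximation}, and the substance lies in reducing it, as you do, to $(1-\alpha)\wt \le (\rho-\alpha)\,C(\mathcal{R}^*)$. The paper never quantifies ``bounded relative to $C(\mathcal{R}^*)$'', so your explicit threshold $\wt \le (\rho^2-\rho+1)\,C(\mathcal{R}^*)$ usefully makes the statement precise, and it is the sharp $\alpha$-independent condition, being the infimum of $\frac{\rho-\alpha}{1-\alpha}$ over the admissible $\alpha$.
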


Note that the practical value of \cref{thm:approximation} lies not in the absolute tightness of $\alpha$ but in establishing a formal mechanism for transferring GED approximation guarantees to VRP cost bounds.

\subsection{GED-Based Multi-Start Heuristic}\label{sec:solver}

To generate heuristic solutions for the structural analysis in
\cref{sec:experiments}, we use a standard Clarke--Wright multi-start
heuristic with feasibility repair.  The heuristic is not a
contribution of this paper; we describe it briefly to ensure
reproducibility.
\cref{tab:ged-map} maps each component to its GED interpretation.

\begin{table}[t]
\centering
\renewcommand{\arraystretch}{1.2}
\caption{GED Interpretation of Each heuristic Component.}
\label{tab:ged-map}
\begin{tabularx}{\columnwidth}{lX}
\toprule
\textbf{Component} & \textbf{GED interpretation} \\
\midrule
CW merge           & $\Delta\GED=s(i,j)$: greedy maximisation (\cref{thm:merge}) \\
2-opt reversal     & Cost $\downarrow$ $\Rightarrow$ GED $\uparrow$ (\cref{thm:main}) \\
Route elimination  & Ordered by net-GED score (heuristic rule, \cref{cor:elim}) \\
T1 direct insert   & Cheapest insertion $=$ max-$\Delta$GED placement (\cref{thm:merge}) \\
T2/T3 ejection     & Capacity-feasible chain preserving GED gains \\
Noisy restarts     & Independent exploration of the GED landscape \\
\bottomrule
\end{tabularx}
\end{table}

\subsection{GED-Greedy Construction}\label{sec:construction}

\subsubsection{Clarke--Wright as Greedy GED Maximization}
The heuristic initializes $n$ singleton routes $\{[0,v_i,0]\}$.
By \cref{thm:merge}, the Clarke--Wright savings value
$s(i,j) = w(i,0) + w(0,j) - w(i,j)$
equals the GED increment from merging routes through edge $(i,j)$.
Processing merges in descending savings order is therefore
greedy GED maximisation: each step selects the feasible merge
that increases $\GED(G_s, G_t)$ the most.

After the merge loop, intra-route 2-opt improves each route.
Every accepted reversal strictly decreases $C(R)$ and therefore
strictly increases GED (\cref{thm:main}).

\subsubsection{Noisy Restarts}
A single greedy merge sequence may commit to suboptimal early
merges that cannot be undone.  To explore diverse regions of the
GED landscape, we generate multiple restarts by perturbing the
savings values:
\begin{equation}\label{eq:noisy}
\tilde{s}(i,j)=s(i,j)+\varepsilon_{ij},\quad
\varepsilon_{ij}\sim\mathcal{N}\!\left(0,\,\sigma^2 s(i,j)^2\right),
\end{equation}
with $\sigma\in[0,\,0.5]$ growing across restarts.  Different noise
realisations produce different merge orderings and therefore
structurally distinct solutions: on an $n{=}199$ instance, two
noisy restarts share only $28\%$ of their edges on average.

Each feasible result is 2-opt polished and compared against the
incumbent.  The solver returns the best solution found across all
restarts.

\subsection{Feasibility Repair: T1/T2/T3 Cascade}\label{sec:repair}

When CW produces $m'>m$ routes, surplus routes must be eliminated
while preserving feasibility.  The cascade eliminates one route at a
time, always selecting the route with the highest \emph{net-GED score}
(\cref{cor:elim}):
\begin{equation}\label{eq:netscore}
\mathrm{score}(R_k) = C(R_k) - \sum_{c\in R_k} \delta^*(c),
\end{equation}
where $\delta^*(c) = \min_{R_j \neq R_k,\,\text{feasible}} \Delta C(c \to R_j)$
is the cheapest feasible reinsertion cost for customer~$c$.
The route with the highest score has the most ``wasted'' GED as its
route cost far exceeds the cost of redistributing its customers, and
is therefore eliminated first.

Once a route is selected for elimination, its customers are
reinserted in decreasing order of demand (heaviest first, to avoid
capacity deadlocks) using a three-tier strategy:

\begin{description}
\item[T1 — Direct insertion.]
Insert customer~$c$ at the cheapest feasible position in any
remaining route.  The insertion cost equals $-\Delta\GED$ for that
position (\cref{thm:merge}), so cheapest insertion is equivalent to
maximum-$\Delta$GED placement.

\item[T2 — One-level ejection.]
When all routes lack sufficient residual capacity for~$c$:
identify a route~$R_j$ where ejecting one customer~$d$ to a third
route~$R_k$ creates room for~$c$.  The chain $d \to R_k$,
$c \to R_j$ is executed if the combined reinsertion is
capacity-feasible.

\item[T3 — Two-level ejection.]
For instances with $>97\%$ vehicle utilisation, even T2 may fail
because every route is near capacity.  T3 extends the chain by
first ejecting a customer~$e$ from~$R_k$ to a fourth route~$R_l$,
creating room for~$d$, which in turn creates room for~$c$:
$e \to R_l$, $d \to R_k$, $c \to R_j$.
\end{description}

If the selected route cannot be eliminated by any tier, it is
skipped and the next-highest-scoring route is attempted.
When $m' < m$ (too few routes), the solver splits the route with the
best-cost feasible midpoint partition.

\Cref{alg:solver} summarizes the complete solver.

\begin{algorithm}[t]
\caption{Multi-Start GED-Greedy Construction}
\label{alg:solver}
\begin{algorithmic}[1]
\Require VRP instance, time budget $T$
\Ensure Best feasible solution $R^*$
\State $R^* \leftarrow \textsc{nil}$;\; $C^* \leftarrow \infty$
\For{$\mathrm{attempt}=0,1,2,\ldots$}
  \If{elapsed $> T$} \textbf{break} \EndIf
  \State $\sigma\leftarrow0$ if attempt$\,{=}\,0$, else
         $\min(0.02+0.003\cdot\mathrm{attempt},\;0.5)$
  \State $R\leftarrow\textsc{CW}(\sigma)$
         \Comment{greedy GED max., \cref{thm:merge}}
  \State $R\leftarrow\textsc{2-Opt}(R)$
         \Comment{$\Delta C<0 \Rightarrow \Delta\GED>0$}
  \If{$|R| > m$}
    \State $R\leftarrow\textsc{T1/T2/T3}(R)$
           \Comment{eliminate by net-GED score, \cref{cor:elim}}
  \ElsIf{$|R| < m$}
    \State $R\leftarrow\textsc{Split}(R)$
  \EndIf
  \If{$R$ is feasible \textbf{and} $C(R) < C^*$}
    \State $R^* \leftarrow R$;\; $C^* \leftarrow C(R)$
  \EndIf
\EndFor
\State \Return $R^*$
\end{algorithmic}
\end{algorithm}

\subsection{Complexity}

Each restart requires $O(n^2\log n)$ for the savings sort and
$O(n^2)$ for the T1/T2/T3 cascade.  Intra-route 2-opt is $O(n^2)$
per route.  The number of restarts is bounded by the time budget;
for $n{\leq}200$ at a 1-second budget, the solver completes
20--80 restarts depending on utilisation.

\section{Experimental Validation}\label{sec:experiments}

We validate the theoretical results and conduct structural analysis on
90 CVRP instances from eight standard benchmark families
(Augerat~A/B/P, Christofides~E, Fisher~F, Meisel~M, TSPLIB) with
proven optimal solutions.  For each instance, we extract the optimal
routing graph~$G_t^*$ from published solution files (CVRPLIB~\cite{cvrplib})
and compute all GED quantities independently.  Heuristic solutions are
generated by 50 noisy Clarke--Wright restarts per instance with 2-opt
polish.

\Cref{tab:instances} summarizes the experimental setup for reproducibility.
Distances are rounded to the nearest integer as specified in each benchmark
family.  The same 50 fixed random seeds (integers $0, 1, \ldots, 49$) are
used across all instances.

\begin{table}[t]
\centering
\caption{Experimental setup: benchmark families, instance counts, and sizes.
Implementation: Python 3.10; Intel i7-12700H (2.30\,GHz), 16\,GB RAM;
random seeds $0$--$49$; 1-second time budget per restart; 50 restarts per instance.
Optimal solutions loaded from CVRPLIB~\cite{cvrplib}.
Code will be released upon acceptance.}
\label{tab:instances}
\renewcommand{\arraystretch}{1.15}
\begin{tabular}{@{}llrr@{}}
\toprule
Family & Source & \#Inst.\ & $n$ range \\
\midrule
Augerat A      & \cite{augerat1995}      & 27 & 16--80  \\
Augerat B      & \cite{augerat1995}      & 23 & 31--78  \\
Augerat P      & \cite{augerat1995}      & 24 & 15--101 \\
Christofides E & \cite{christofides1969} & 10 & 22--100 \\
Fisher F       & \cite{fisher1994}       &  2 & 70--134 \\
Meisel M       & \cite{meisel2011}       &  2 & 29--31  \\
TSPLIB         & \cite{reinelt1991}      &  2 & 134--199\\
\midrule
\textbf{Total} &                         & \textbf{90} & \textbf{15--199} \\
\bottomrule
\end{tabular}
\end{table}

\subsection{Theorem Verification}\label{sec:exp-verify}

The identity $C(R) = W_{\mathrm{total}} - \GED(G_s, G_t)$
(\cref{thm:main}) holds to zero error on all 90 instances.
The per-merge identity $\Delta\GED = s(i,j)$ (\cref{thm:merge}),
verified by instrumenting the CW merge loop, also holds to zero
error across all merges on all instances.  These verifications
confirm computational correctness; the remainder of this section
presents the structural analyses that the framework enables.

\subsection{Optimal GED Structure}\label{sec:exp-structure}

\cref{tab:ged-structure} characterises the structure of optimal
routing graphs.  The ratio $\GED^*/W_{\mathrm{total}}$ measures
what fraction of the complete graph's total edge weight is
``edited away'' (deleted) in the optimal solution.

\begin{table}[h!]
\centering
\caption{Optimal GED structure across 90 benchmark instances.}
\label{tab:ged-structure}
\begin{tabular}{lcccc}
\toprule
\textbf{Metric} & \textbf{Min} & \textbf{Median} & \textbf{Mean} & \textbf{Max} \\
\midrule
$\GED^*/W_{\mathrm{total}}$     & 83.0\% & 98.3\% & 97.7\% & 99.7\% \\
$|E_t^*|/|E|$                    & 1.5\%  &  4.8\% &  5.5\% & 17.1\% \\
\bottomrule
\end{tabular}
\end{table}

Under the directed model of \cref{def:feasible-characterization},
every feasible routing graph uses exactly $|E_t|=n+m$ arcs:
each customer has out-degree $1$, the depot has out-degree $m$, and
$|E_t|$ equals the sum of out-degrees. This is a counting consequence
of feasibility, not a property of optimality. Because edge weights are
symmetric ($w(i,j)=w(j,i)$), throughout the experimental section we
report $|E_t|$ and $|E|$ as counts of \emph{unordered} edge pairs --- the
standard CVRP convention --- so $|E|=n(n{+}1)/2$ and the reported
ratio in \cref{tab:ged-structure} equals $2(n+m)/n(n{+}1)$. Directed
and undirected counts differ by a factor of two; the ratio is identical
up to that factor.

The 5.5\% mean ratio is therefore a counting fact about all feasible
CVRP solutions on the tested instance sizes ($n=15$ to $n=199$), not a
distinguishing property of optimal solutions. The substantive
structural question --- which the GED reformulation enables --- is not
\emph{how many} edges are used but \emph{which}: how predictable the
identity of optimal edges is across heuristic restarts, and how the
optimality gap distributes across edges that are kept versus omitted.
\cref{tab:overlap} below addresses the first question; \cref{sec:exp-decomp}
addresses the second.

\subsection{Edge Overlap: Heuristic vs.\ Optimal}\label{sec:exp-overlap}

For each optimal edge $e \in E_t^*$, we measure how often it
appears across 50 noisy CW restarts.  \cref{tab:overlap}
categorises optimal edges by their heuristic frequency.

\begin{table}[h!]
\centering
\caption{CW frequency of optimal edges across 90 instances
(4{,}986 optimal edges total).}
\label{tab:overlap}
\begin{tabular}{lrr}
\toprule
\textbf{CW frequency} & \textbf{Edges} & \textbf{Fraction} \\
\midrule
$\geq 80\%$ (always found)      & 436   &  8.7\% \\
$20$--$80\%$ (sometimes found)  & 3{,}799 & 76.2\% \\
$5$--$20\%$ (rarely found)      & 602   & 12.1\% \\
$< 5\%$ (never found in tested restarts)     & 149   &  3.0\% \\
\bottomrule
\end{tabular}
\end{table}

Only 8.7\% of optimal edges are consistently found by CW
($f \geq 0.8$) the stable GED backbone of
\cref{prop:freq}.  The vast majority (76.2\%) are found
intermittently, confirming that CW explores different merge
orderings but without systematic preference for optimal edges.
Most critically, 3.0\% of optimal edges were \textit{not found in any of the tested restarts} ($f < 0.05$).  We report this as an empirical observation, not a formal proof of unreachability.
The figure is specific to the Gaussian noise model of \cref{eq:noisy}
with $\sigma\in[0,0.5]$ and 50 restarts per instance; alternative
perturbation strategies (ruin-and-recreate, large-neighborhood search,
uniform-noise restarts) may reach a different subset and we make no
claim about reachability outside the tested protocol.

\subsection{Gap Decomposition}\label{sec:exp-decomp}

For every heuristic solution (not only the best), we decompose
the cost gap $C(R) - C(R^*)$ into two graph-structural components
enabled by the GED formulation:

\begin{itemize}
\item \textbf{Missed weight:}
$\displaystyle\sum_{e \in E_t^* \setminus E_t} w(e)$
--- the total weight of optimal edges the heuristic failed to use.

\item \textbf{Wasted weight:}
$\displaystyle\sum_{e \in E_t \setminus E_t^*} w(e)$
--- the total weight of non-optimal edges the heuristic used instead.
\end{itemize}

The gap satisfies
$C(R) - C(R^*) = \text{wasted} - \text{missed}$
exactly for every feasible solution, regardless of seed.
\cref{tab:decomp} reports both the best solution per instance
and the average across all 50 seeds.
%

\begin{figure*}[t]
\centering
\begin{tikzpicture}[
    >=Stealth,
    font=\small,
    depot/.style={circle, draw, fill=black, minimum size=5pt, inner sep=0pt},
    cust/.style={circle, draw, fill=blue!20, minimum size=3.5pt, inner sep=0pt},
    custmiss/.style={circle, draw, fill=red!40, minimum size=5pt, inner sep=0pt},
    gedge/.style={draw, gray!25, line width=0.15pt},        
    route1/.style={draw, blue!70, line width=1.2pt},
    route2/.style={draw, red!60!black, line width=1.2pt},
    route3/.style={draw, green!50!black, line width=1.2pt},
    route4/.style={draw, orange!80!black, line width=1.2pt},
    route5/.style={draw, violet!70, line width=1.2pt},
    shared/.style={draw, blue!40, line width=0.9pt},
    missed/.style={draw, red!70, line width=1.4pt, dashed},
    wasted/.style={draw, orange!80, line width=1.4pt, densely dotted},
]

\def\sc{0.72}

\begin{scope}[shift={(0,0)}, scale=\sc]
  \node[font=\small\bfseries, anchor=south] at (3.5, 7.8) {(a) Complete graph $G_s$};

  \coordinate (n0) at (1.04,1.34);
  \coordinate (n1) at (0.00,3.57);
  \coordinate (n2) at (6.40,1.79);
  \coordinate (n3) at (5.06,4.77);
  \coordinate (n4) at (6.85,6.70);
  \coordinate (n5) at (2.38,2.23);
  \coordinate (n6) at (5.21,4.47);
  \coordinate (n7) at (2.09,0.60);
  \coordinate (n8) at (6.85,0.45);
  \coordinate (n9) at (4.02,3.43);
  \coordinate (n10) at (1.64,0.89);
  \coordinate (n11) at (1.34,3.43);
  \coordinate (n12) at (4.17,4.62);
  \coordinate (n13) at (0.30,7.00);
  \coordinate (n14) at (4.77,3.13);
  \coordinate (n15) at (5.06,0.00);
  \coordinate (n16) at (0.15,1.79);
  \coordinate (n17) at (1.34,6.70);
  \coordinate (n18) at (1.49,5.96);
  \coordinate (n19) at (4.91,6.70);
  \coordinate (n20) at (2.98,1.64);
  \coordinate (n21) at (1.34,5.51);
  \coordinate (n22) at (1.04,5.81);
  \coordinate (n23) at (5.81,3.43);
  \coordinate (n24) at (1.34,4.77);
  \coordinate (n25) at (1.94,3.57);
  \coordinate (n26) at (2.09,1.19);
  \coordinate (n27) at (1.79,4.77);
  \coordinate (n28) at (4.32,3.72);
  \coordinate (n29) at (1.94,7.00);
  \coordinate (n30) at (1.49,6.70);
  \coordinate (n31) at (4.47,6.11);
  \coordinate (n32) at (1.04,6.11);
  \coordinate (n33) at (2.23,6.40);
  \coordinate (n34) at (5.21,2.98);
  \coordinate (n35) at (6.70,1.49);

  \foreach \i in {0,1,...,35} {
    \foreach \j in {\i,...,35} {
      \pgfmathparse{int(mod(\i*36+\j, 4))}
      \ifnum\pgfmathresult=0
        \draw[gedge] (n\i) -- (n\j);
      \fi
    }
  }

  \foreach \i in {1,...,35} { \node[cust] at (n\i) {}; }
  \node[depot] at (n0) {};

  \node[font=\scriptsize, anchor=north] at (3.5, -0.5)
    {$|E| = 630$ edges,\; $W_{\text{total}} = 32{,}608$};
\end{scope}

\begin{scope}[shift={(6.0,0)}, scale=\sc]
  \node[font=\small\bfseries, anchor=south] at (3.5, 7.8) {(b) Optimal $G_t^*$: 5 routes};

  \coordinate (n0) at (1.04,1.34);
  \coordinate (n1) at (0.00,3.57);
  \coordinate (n2) at (6.40,1.79);
  \coordinate (n3) at (5.06,4.77);
  \coordinate (n4) at (6.85,6.70);
  \coordinate (n5) at (2.38,2.23);
  \coordinate (n6) at (5.21,4.47);
  \coordinate (n7) at (2.09,0.60);
  \coordinate (n8) at (6.85,0.45);
  \coordinate (n9) at (4.02,3.43);
  \coordinate (n10) at (1.64,0.89);
  \coordinate (n11) at (1.34,3.43);
  \coordinate (n12) at (4.17,4.62);
  \coordinate (n13) at (0.30,7.00);
  \coordinate (n14) at (4.77,3.13);
  \coordinate (n15) at (5.06,0.00);
  \coordinate (n16) at (0.15,1.79);
  \coordinate (n17) at (1.34,6.70);
  \coordinate (n18) at (1.49,5.96);
  \coordinate (n19) at (4.91,6.70);
  \coordinate (n20) at (2.98,1.64);
  \coordinate (n21) at (1.34,5.51);
  \coordinate (n22) at (1.04,5.81);
  \coordinate (n23) at (5.81,3.43);
  \coordinate (n24) at (1.34,4.77);
  \coordinate (n25) at (1.94,3.57);
  \coordinate (n26) at (2.09,1.19);
  \coordinate (n27) at (1.79,4.77);
  \coordinate (n28) at (4.32,3.72);
  \coordinate (n29) at (1.94,7.00);
  \coordinate (n30) at (1.49,6.70);
  \coordinate (n31) at (4.47,6.11);
  \coordinate (n32) at (1.04,6.11);
  \coordinate (n33) at (2.23,6.40);
  \coordinate (n34) at (5.21,2.98);
  \coordinate (n35) at (6.70,1.49);

  \draw[route1] (n0)--(n9)--(n6)--(n3)--(n4)--(n19)--(n31)--(n12)--(n0);
  \draw[route2] (n0)--(n28)--(n14)--(n34)--(n23)--(n2)--(n35)--(n8)--(n15)--(n0);
  \draw[route3] (n0)--(n16)--(n11)--(n24)--(n27)--(n25)--(n5)--(n20)--(n0);
  \draw[route4] (n0)--(n10)--(n7)--(n26)--(n0);
  \draw[route5] (n0)--(n1)--(n22)--(n32)--(n13)--(n17)--(n30)--(n29)--(n33)--(n18)--(n21)--(n0);

  \foreach \i in {1,...,35} { \node[cust] at (n\i) {}; }
  \node[depot] at (n0) {};

  \node[font=\scriptsize, anchor=north] at (3.5, -0.5)
    {$|E_t^*| = 40$ edges,\; $C^* = 799$};
\end{scope}

\begin{scope}[shift={(12.0,0)}, scale=\sc]
  \node[font=\small\bfseries, anchor=south] at (3.5, 7.8) {(c) Heuristic: missed \& wasted};

  \coordinate (n0) at (1.04,1.34);
  \coordinate (n1) at (0.00,3.57);
  \coordinate (n2) at (6.40,1.79);
  \coordinate (n3) at (5.06,4.77);
  \coordinate (n4) at (6.85,6.70);
  \coordinate (n5) at (2.38,2.23);
  \coordinate (n6) at (5.21,4.47);
  \coordinate (n7) at (2.09,0.60);
  \coordinate (n8) at (6.85,0.45);
  \coordinate (n9) at (4.02,3.43);
  \coordinate (n10) at (1.64,0.89);
  \coordinate (n11) at (1.34,3.43);
  \coordinate (n12) at (4.17,4.62);
  \coordinate (n13) at (0.30,7.00);
  \coordinate (n14) at (4.77,3.13);
  \coordinate (n15) at (5.06,0.00);
  \coordinate (n16) at (0.15,1.79);
  \coordinate (n17) at (1.34,6.70);
  \coordinate (n18) at (1.49,5.96);
  \coordinate (n19) at (4.91,6.70);
  \coordinate (n20) at (2.98,1.64);
  \coordinate (n21) at (1.34,5.51);
  \coordinate (n22) at (1.04,5.81);
  \coordinate (n23) at (5.81,3.43);
  \coordinate (n24) at (1.34,4.77);
  \coordinate (n25) at (1.94,3.57);
  \coordinate (n26) at (2.09,1.19);
  \coordinate (n27) at (1.79,4.77);
  \coordinate (n28) at (4.32,3.72);
  \coordinate (n29) at (1.94,7.00);
  \coordinate (n30) at (1.49,6.70);
  \coordinate (n31) at (4.47,6.11);
  \coordinate (n32) at (1.04,6.11);
  \coordinate (n33) at (2.23,6.40);
  \coordinate (n34) at (5.21,2.98);
  \coordinate (n35) at (6.70,1.49);

  \draw[shared] (n0)--(n1);  \draw[shared] (n0)--(n9);
  \draw[shared] (n0)--(n10); \draw[shared] (n0)--(n15);
  \draw[shared] (n0)--(n16); \draw[shared] (n0)--(n20);
  \draw[shared] (n0)--(n21); \draw[shared] (n0)--(n26);
  \draw[shared] (n1)--(n22); \draw[shared] (n2)--(n35);
  \draw[shared] (n3)--(n4);  \draw[shared] (n3)--(n6);
  \draw[shared] (n4)--(n19); \draw[shared] (n5)--(n20);
  \draw[shared] (n5)--(n25); \draw[shared] (n7)--(n10);
  \draw[shared] (n7)--(n26); \draw[shared] (n8)--(n15);
  \draw[shared] (n8)--(n35); \draw[shared] (n11)--(n16);
  \draw[shared] (n11)--(n24);\draw[shared] (n13)--(n17);
  \draw[shared] (n13)--(n32);\draw[shared] (n14)--(n34);
  \draw[shared] (n17)--(n30);\draw[shared] (n18)--(n21);
  \draw[shared] (n18)--(n33);\draw[shared] (n19)--(n31);
  \draw[shared] (n22)--(n32);\draw[shared] (n23)--(n34);
  \draw[shared] (n24)--(n27);\draw[shared] (n25)--(n27);
  \draw[shared] (n29)--(n30);\draw[shared] (n29)--(n33);

  \draw[missed] (n0)--(n12)  node[pos=0.5, font=\tiny, fill=white, inner sep=0.5pt] {61};
  \draw[missed] (n0)--(n28)  node[pos=0.4, font=\tiny, fill=white, inner sep=0.5pt] {54};
  \draw[missed] (n2)--(n23)  node[pos=0.5, font=\tiny, fill=white, inner sep=0.5pt] {23};
  \draw[missed] (n6)--(n9)   node[pos=0.5, font=\tiny, fill=white, inner sep=0.5pt] {21};
  \draw[missed] (n12)--(n31) node[pos=0.5, font=\tiny, fill=white, inner sep=0.5pt] {20};
  \draw[missed] (n14)--(n28) node[pos=0.5, font=\tiny, fill=white, inner sep=0.5pt] {10};

  \draw[wasted] (n0)--(n14)  node[pos=0.6, font=\tiny, fill=white, inner sep=0.5pt] {55};
  \draw[wasted] (n0)--(n31)  node[pos=0.3, font=\tiny, fill=white, inner sep=0.5pt] {79};
  \draw[wasted] (n2)--(n12)  node[pos=0.5, font=\tiny, fill=white, inner sep=0.5pt] {48};
  \draw[wasted] (n6)--(n23)  node[pos=0.5, font=\tiny, fill=white, inner sep=0.5pt] {16};
  \draw[wasted] (n9)--(n28)  node[pos=0.5, font=\tiny, fill=white, inner sep=0.5pt] {6};
  \draw[wasted] (n12)--(n28) node[pos=0.5, font=\tiny, fill=white, inner sep=0.5pt] {12};

  \foreach \i in {1,...,35} { \node[cust] at (n\i) {}; }
  \node[depot] at (n0) {};

  \node[font=\scriptsize, anchor=north, align=center] at (3.5, -0.5)
    {34 shared, 6 missed (wt\,189), 6 wasted (wt\,216)};
\end{scope}

\node[draw, thick, rounded corners=4pt, fill=yellow!8,
      font=\small, align=center, anchor=north]
  at (9.6, -1.3)
  {Theorem~1:\;
   $\underbrace{C(\mathcal{R}) = 826}_{\text{heuristic cost}}
    \;=\;
    \underbrace{W_{\text{total}} = 32{,}608}_{\text{constant}}
    \;-\;
    \underbrace{\GED(G_s, G_t) = 31{,}782}_{\sum \text{deleted edges}}$
   \qquad
   Gap $= \underbrace{216}_{\text{wasted}} - \underbrace{189}_{\text{missed}} = 27$};

\node[anchor=north west, font=\scriptsize] at (0.0, -1.0) {%
  \begin{tikzpicture}[baseline=-0.5ex]
    \draw[shared, line width=1.5pt] (0,0) -- (0.5,0);
    \node[anchor=west] at (0.6,0) {shared (34)};
    \draw[missed, line width=1.5pt] (0,-0.35) -- (0.5,-0.35);
    \node[anchor=west] at (0.6,-0.35) {missed optimal};
    \draw[wasted, line width=1.5pt] (0,-0.7) -- (0.5,-0.7);
    \node[anchor=west] at (0.6,-0.7) {wasted heuristic};
    \node[depot] at (0.25, -1.1) {};
    \node[anchor=west] at (0.6,-1.1) {depot};
  \end{tikzpicture}};

\end{tikzpicture}
\caption{GED--VRP decomposition on instance A-n36-k5 ($n{=}35$, $m{=}5$).
\textbf{(a)}~The complete graph $G_s$ has 630 edges with total weight
$W_{\text{total}} = 32{,}608$.
\textbf{(b)}~The optimal routing graph $G_t^*$ retains only 40~edges
($6.3\%$) with cost $C^* = 799$.
\textbf{(c)}~Comparing the heuristic solution ($C = 826$) against
the optimum: 34~edges are shared (blue), 6~optimal edges are
\emph{missed} (red dashed, total weight~189), and 6~non-optimal
edges are \emph{wasted} (orange dotted, total weight~216).
The cost gap $826 - 799 = 27$ equals wasted${}-{}$missed
$= 216 - 189 = 27$ exactly, as guaranteed by Theorem~1.
Edge weights on the missed and wasted edges show per-edge
cost attribution---a capability of the edge-additive GED objective.
Edge counts are reported here under the undirected convention
$|E|=n(n{+}1)/2$ used throughout \cref{sec:experiments}; under the
directed model of \cref{def:feasible-characterization} each undirected
stroke corresponds to two arcs.}
\label{fig:ged-decomp-instance}
\end{figure*}
\begin{table}[t]
\centering
\caption{Gap decomposition: missed optimal edges vs.\ wasted
non-optimal edges across 90 instances (as \% of~$C^*$).}
\label{tab:decomp}
\begin{tabular}{lcc}
\toprule
\textbf{Component} & \textbf{Best solution} & \textbf{All solutions} \\
\midrule
Missed optimal edge weight  & 42.5\% & 56.2\% \\
Wasted non-optimal edge weight & 46.8\% & 71.7\% \\
Net gap (wasted $-$ missed) & 4.3\% & 15.5\% \\
\bottomrule
\end{tabular}
\end{table}

Even for the best solution, the heuristic \emph{replaces} 42.5\%
of the optimal cost with alternative edges costing 46.8\%---the
net gap of 4.3\% is a small residual of two large opposing flows.
Across all solutions, the flows are larger (56.2\%/71.7\%) but
the pattern is the same: heuristics substitute edges of
\emph{similar} weight, finding locally plausible but globally
suboptimal alternatives.

To determine whether missed edges reflect \emph{ordering} errors
(wrong sequence within a route) or \emph{assignment} errors
(customers placed in wrong routes), we classify each missed edge as
follows. For each missed edge $(i,j)\in E_t^*\setminus E_t$, let
$\rho^*(v)$ and $\rho(v)$ denote the route index containing customer
$v$ in the optimal and heuristic solution respectively. The edge is
an \emph{assignment error} if $\rho(i)\neq\rho^*(i)$ or
$\rho(j)\neq\rho^*(j)$ (at least one endpoint is in a different
route in the heuristic than in the optimum), and an \emph{ordering
error} otherwise (both endpoints are placed in their correct routes,
but the heuristic visits them in a non-optimal sequence). Across all
4{,}378 solution-instance pairs, \textbf{94.3\%} of missed edges are
assignment errors (median 97.6\%, exceeding 80\% in 93.2\% of
cases). This is a structural finding invariant across random
seeds: the dominant source of heuristic suboptimality is incorrect
route membership, not suboptimal intra-route ordering.

\medskip\noindent\textbf{Example (A-n36-k5).}
Across 50 heuristic solutions for this instance ($n{=}35$, $m{=}5$,
BKS${=}799$), the cost gap ranges from 16 to 220, with 4--28
missed optimal edges per solution.  The identity
gap $=$ wasted${}-{}$missed holds exactly in every case.
In all 50 solutions, the missed edges predominantly connect
customers to incorrect routes, confirming the assignment-error
diagnosis.  The specific missed edges vary across seeds, but the
structural conclusion is invariant: the GED framework characterises
the nature of the gap, not only its magnitude.
Moreover, \cref{fig:gap-decomp,fig:ged-decomp-instance} show the complete and optimal routing graphs as well as gap decomposition.

\begin{figure*}[h!]
\centering
\begin{tikzpicture}[
    >=Stealth,
    node distance=0.6cm and 0.8cm,
    inputbox/.style={
        draw, rounded corners=3pt, fill=blue!8,
        minimum height=0.85cm, minimum width=2.4cm,
        font=\small, align=center, thick
    },
    procbox/.style={
        draw, rounded corners=3pt, fill=orange!10,
        minimum height=0.85cm, minimum width=2.8cm,
        font=\small, align=center, thick
    },
    outputbox/.style={
        draw, rounded corners=3pt, fill=green!10,
        minimum height=0.85cm, minimum width=2.6cm,
        font=\small, align=center, thick
    },
    corebox/.style={
        draw, thick, fill=yellow!8, rounded corners=4pt,
        minimum height=1.1cm, minimum width=3.2cm,
        font=\small\bfseries, align=center
    },
    arr/.style={->, thick, >=Stealth},
    darr/.style={->, thick, >=Stealth, dashed, gray!70}
]

\node[inputbox] (inst) {VRP Instance\\[-1pt]$\mathcal{I}=(V,E,w,m,Q,D)$};
\node[inputbox, below=0.5cm of inst] (optsol) {Optimal Solution\\[-1pt]$\mathcal{R}^* \to G_t^*$};
\node[inputbox, below=0.5cm of optsol] (heur) {Heuristic Solutions\\[-1pt]$\hat{\mathcal{R}}^{(1)},\ldots,\hat{\mathcal{R}}^{(A)}$};

\node[corebox, right=1cm of optsol] (ged) {GED Formulation\\[-1pt]$C(\mathcal{R})=W_{\text{total}}-\text{GED}(G_s,G_t)$};

\node[procbox, right=1cm of ged, yshift=1.5cm] (sparse) {Sparsity Analysis\\[-1pt]$|E_t^*|/|E|$};
\node[procbox, right=1cm of ged, yshift=0.5cm] (overlap) {Edge Reachability\\[-1pt]CW frequency of $E_t^*$};
\node[procbox, right=1cm of ged, yshift=-0.5cm] (decomp) {Gap Decomposition\\[-1pt]missed vs.\ wasted};
\node[procbox, right=1cm of ged, yshift=-1.5cm] (approx) {Approx.\ Transfer\\[-1pt]$\alpha = \text{GED}_{\hat{R}}/\text{GED}^*$};

\node[outputbox, right=1cm of sparse] (o1) {$5.5\%$ edges used\\[-1pt]$97.7\%$ weight deleted};
\node[outputbox, right=1cm of overlap] (o2) {$8.7\%$ stable backbone\\[-1pt]$3.0\%$ not found in tested restarts};
\node[outputbox, right=1cm of decomp] (o3) {$42.5\%$ missed\\[-1pt]$46.8\%$ wasted};
\node[outputbox, right=1cm of approx] (o4) {$\alpha \geq 0.99$\\[-1pt]on all 90 instances};

\node[draw, thick, rounded corners=4pt, fill=red!6,
      minimum height=0.85cm, minimum width=3.0cm,
      font=\small, align=center,
      below=0.7cm of o4] (gnn)
      {GNN Edge Prediction\\[-1pt](future work)};

\draw[arr] (inst) -- (ged);
\draw[arr] (optsol) -- (ged);
\draw[arr] (heur) -- (ged);
\draw[arr] (ged) -- (sparse);
\draw[arr] (ged) -- (overlap);
\draw[arr] (ged) -- (decomp);
\draw[arr] (ged) -- (approx);
\draw[arr] (sparse) -- (o1);
\draw[arr] (overlap) -- (o2);
\draw[arr] (decomp) -- (o3);
\draw[arr] (approx) -- (o4);
\draw[darr] (o2.south) -- ++(0,-0.3) -| (gnn.north);
\draw[darr] (o1.south) -- ++(0,-0.55) -| (gnn.north);

\node[font=\scriptsize\bfseries, above=0.1cm of inst.north, text=blue!60!black] {Inputs};
\node[font=\scriptsize\bfseries, above=0.1cm of sparse.north, text=orange!70!black] {Structural Analyses};
\node[font=\scriptsize\bfseries, above=0.1cm of o1.north, text=green!50!black] {Findings};

\end{tikzpicture}
\caption{GED--VRP analytical framework.  A VRP instance, its optimal solution, and heuristic solutions are mapped through the edge-additive GED formulation (Theorem~1), enabling four structural analyses.  The sparsity and reachability findings define supervision signals for future GNN-based edge prediction.}
\label{fig:framework}
\end{figure*}

\begin{figure}[h!]
\centering
\begin{tikzpicture}[
    >=Stealth,
    font=\small
]

\node[font=\small\bfseries] at (2.5, 5.0) {Classical View};
\draw[thick, rounded corners=3pt, fill=gray!8]
    (-0.5, 4.0) rectangle (5.5, 4.7);
\node[align=center] at (2.5, 4.35)
    {$C = 826$\quad BKS $= 799$\quad Gap $= {+}3.4\%$};

\draw[->, very thick, gray] (2.5, 3.8) -- (2.5, 3.3)
    node[midway, right=2pt, font=\scriptsize, text=gray] {GED analysis};

\node[font=\small\bfseries] at (2.5, 3.0) {GED Edge-Level View};

\def\barw{5.0}
\def\barh{0.65}

\node[font=\scriptsize, anchor=east] at (-0.6, 2.0) {Optimal $E_t^*$};
\node[font=\scriptsize, anchor=east] at (-0.6, 1.7) {(40 edges)};

\pgfmathsetmacro{\sh}{34/40*\barw}    
\pgfmathsetmacro{\mi}{6/40*\barw}     

\fill[blue!35] (0, 1.55) rectangle (\sh, 1.55+\barh);
\fill[red!45]  (\sh, 1.55) rectangle (\sh+\mi, 1.55+\barh);
\draw[thick]   (0, 1.55) rectangle (\barw, 1.55+\barh);

\node[font=\scriptsize\bfseries, white] at (\sh/2, 1.87)
    {34 found (85\%)};
\node[font=\scriptsize\bfseries, white] at (\sh+\mi/2, 1.87)
    {6};

\node[font=\scriptsize, red!70!black, anchor=west] at (\barw+0.15, 1.95)
    {missed};
\node[font=\scriptsize, red!70!black, anchor=west] at (\barw+0.15, 1.65)
    {wt.\ 189};

\node[font=\scriptsize, anchor=east] at (-0.6, 0.65) {Heuristic $E_t$};
\node[font=\scriptsize, anchor=east] at (-0.6, 0.35) {(40 edges)};

\fill[blue!35]   (0, 0.2) rectangle (\sh, 0.2+\barh);
\fill[orange!55] (\sh, 0.2) rectangle (\sh+\mi, 0.2+\barh);
\draw[thick]     (0, 0.2) rectangle (\barw, 0.2+\barh);

\node[font=\scriptsize\bfseries, white] at (\sh/2, 0.52)
    {34 shared};
\node[font=\scriptsize\bfseries, white] at (\sh+\mi/2, 0.52)
    {6};

\node[font=\scriptsize, orange!70!black, anchor=west] at (\barw+0.15, 0.6)
    {wasted};
\node[font=\scriptsize, orange!70!black, anchor=west] at (\barw+0.15, 0.3)
    {wt.\ 216};

\draw[dashed, gray, thick] (\sh, 1.55) -- (\sh, 0.2+\barh);
\draw[dashed, gray, thick] (\sh+\mi, 1.55) -- (\sh+\mi, 0.2+\barh);

\draw[->, thick] (2.5, 0.1) -- (2.5, -0.2);
\draw[thick, rounded corners=3pt, fill=yellow!12]
    (-0.7, -0.9) rectangle (5.35, -0.3);
\node[align=center] at (2.5, -0.6)
    {Gap $=$ wasted $-$ missed $= 216 - 189 = 27$\;\checkmark};

\draw[->, thick] (2.5, -0.95) -- (2.5, -1.2);
\draw[thick, rounded corners=3pt, fill=red!6]
    (-0.6, -2.0) rectangle (5.6, -1.25);
\node[align=center, font=\scriptsize] at (2.5, -1.62)
    {6 missed edges connect customers to \textbf{wrong routes}\\[1pt]
     $\Rightarrow$ gap is an \textbf{assignment error}, not an ordering error};

\end{tikzpicture}
\caption{Gap decomposition for instance A-n36-k5 ($n{=}35$, $m{=}5$).
Classical analysis reports only the aggregate gap ($+3.4\%$).
The GED analysis decomposes this into 6~missed optimal edges
(weight~189) replaced by 6~non-optimal edges (weight~216),
with all missed edges connecting customers to incorrect routes.
The gap $27 = 216 - 189$ is verified exactly.}
\label{fig:gap-decomp}
\end{figure}
%
\subsection{Approximation Ratio}\label{sec:exp-approx}

For each heuristic solution, we compute the GED approximation ratio
$\alpha = \GED(G_s, \hat{G}_t) / \GED(G_s, G_t^*)$.
By \cref{thm:approximation}, this ratio bounds the VRP cost:
$C(\hat{R}) \leq (1{-}\alpha)\,W_{\mathrm{total}} + \alpha\, C(R^*)$.

\begin{table}[h!]
\centering
\caption{Approximation ratio $\alpha$ across 90 instances.}
\label{tab:approx}
\begin{tabular}{lcccc}
\toprule
\textbf{Statistic} & \textbf{Min} & \textbf{Median} & \textbf{Mean} & \textbf{Max} \\
\midrule
$\alpha$ & 0.9912 & 0.9995 & 0.9991 & 1.000 \\
\bottomrule
\end{tabular}
\end{table}

All 90 instances achieve $\alpha \geq 0.99$, confirming that
even a simple CW heuristic captures ${\geq}99\%$ of the optimal
GED.  The mean $\alpha = 0.9991$ demonstrates that \cref{thm:approximation}
provides a practically tight bound: the gap between heuristic and
optimal GED is less than $0.1\%$ on average, even though the
cost gap averages roughly 4\%.  This asymmetry arises because
$\GED^*$ dominates $W_{\mathrm{total}}$ (97.7\% on average), so
a small absolute GED difference maps to a larger relative cost
difference through the affine transformation of \cref{thm:main}.

\subsection{Discussion}

The experiments establish three empirical facts that require the
edge-additive structure of the GED formulation:

\begin{enumerate}
\item \textbf{Optimal solutions are extremely sparse in edge space.}
Only 5.5\% of edges are used, and 3.0\% of optimal edges are
not found by CW under tested restarts — establishing an empirical limitation of greedy construction.

\item \textbf{The cost gap has a precise edge-level decomposition.}
The 4.3\% average gap decomposes into 42.5\% missed optimal
edge weight replaced by 46.8\% wasted non-optimal weight.
This reveals that heuristic errors are primarily assignment
errors (wrong route), not ordering errors (within route) a
distinction invisible in the classical path-aggregated objective.

\item \textbf{The approximation bound is tight.}
$\alpha \geq 0.99$ universally, meaning GED approximation
algorithms can be directly applied to VRP with practically
useful cost guarantees via \cref{thm:approximation}.
\end{enumerate}

These analyses require reasoning about unused edges which edges
\emph{should} be in the solution but are not, and which edges are
in the solution but \emph{should not} be.  The GED formulation
provides this naturally through the complement structure of
\cref{rem:complement}: the binary partition of $E$ into $E_t$ and
$E \setminus E_t$ assigns every edge an unambiguous role (kept or
deleted) with an independent cost contribution.  Classical
formulations, which optimize over route sequences, do not provide
this edge-level decomposition without non-trivial post-processing.

The proposed GED-based formulation provides direct guidance for improving VRP solution methods. The observed sparsity of optimal solutions (about 5.5\% of edges) indicates that effective heuristics should focus on a restricted subset of promising edges rather than exploring the full graph.

Moreover, the decomposition of the optimality gap into missed and wasted edges offers a principled interpretation of local search behavior: improving solutions corresponds to increasing the inclusion of high-value edges while eliminating low-contribution ones.

The edge-additive structure of the formulation also naturally enables learning-based approaches. In particular, edge importance scores derived from GED can be used as supervision signals for models such as graph neural networks, guiding both constructive heuristics and refinement procedures.

Finally, the identification of optimal edges not found by the tested CW restarts highlights a practical limitation of classical heuristics such as Clarke–Wright, suggesting the need for hybrid approaches that combine heuristic construction with edge-level correction mechanisms.

Overall, these observations demonstrate that the proposed framework can be used to guide edge selection and evaluation within existing VRP heuristics. 
\section{GED Perspective on VRP Solution Methods}\label{sec:ged-perspective}
 
The equivalence $C(R) = W_{\mathrm{total}} - \GED(G_s, G_t)$
(\cref{thm:main}) provides more than an alternative objective function.
It reframes the VRP solution landscape in graph-theoretic terms
and opens directions that are difficult to pursue from the classical cost-minimisation
viewpoint.  This section develops three such directions.

\subsection{The Complement View: Optimising Unused Edges}
 
Classical VRP formulations optimise over \emph{used} edges
$E_t \subseteq E$.  The GED formulation inverts this: by
\cref{rem:complement}, the optimal routing graph $G_t^*$ has a
\emph{maximum-weight complement} $\bar{G}_t^* = (V, E \setminus E_t^*)$.
This is equivalent, but the complement view reveals structure that
the primal view hides.
 
For any feasible solution, define the \emph{complement deficiency}
of an edge $e \notin E_t$ as:
\begin{equation}\label{eq:deficiency}
\delta(e) = w(e) - \bar{w}_{\mathrm{avg}},
\end{equation}
where $\bar{w}_{\mathrm{avg}}$ is the average weight in
$E \setminus E_t$.  Edges with large negative deficiency are
cheap connections \emph{not} used by the solution.

One tempting reading on the limited benchmark instances, $58\%$ of the
100 cheapest complement edges connect customers in \emph{different}
routes. Under a uniform null model in which an unused edge spans two
random customers, the chance that those two customers lie in different
routes is approximately $(m-1)/m$, i.e. $80\%$ for $m{=}5$ and $90\%$
for $m{=}10$. The observed $58\%$ is therefore \emph{below} chance:
the cheapest complement edges are disproportionately \emph{within-route},
consistent with intra-route locality. The substantive use of the
complement view is not this cross-route fraction in isolation but rather
the per-edge attribution it enables in \cref{sec:exp-decomp}; the
within-route concentration of cheap complement edges is itself an
artifact of geographic clustering, not a misassignment signal.
 
\subsection{Edge Frequency and the GED Landscape}\label{sec:freq}
 
Running $A$ independent noisy CW restarts (\cref{eq:noisy}) produces
$A$ feasible solutions, each a point in the GED landscape.
The \emph{edge frequency} $f(e) = |\{R^{(a)} : e \in E_t^{(a)}\}|/A$
measures how consistently edge~$e$ appears across restarts.
 
\begin{proposition}[Frequency--GED Stability]\label{prop:freq}
An edge $e$ with $f(e) \approx 1$ has high $\Delta\GED$
(\cref{thm:merge}) under most merge orderings and is therefore
part of the stable GED backbone.  An edge with $f(e) \approx 0$
is a noise-dependent artifact.
\end{proposition}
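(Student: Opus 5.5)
We would first make the statement precise, since as worded it is qualitative. Fix an instance and regard one noisy restart of \cref{alg:solver} as a random experiment driven by the perturbations $\varepsilon_{ij}$ of \eqref{eq:noisy}. Let $p(e)$ be the probability that $e\in E_t$ in the resulting routing graph. Then $f(e)$ is the empirical mean of $A$ independent Bernoulli$(p(e))$ indicators. The hypothesis $f(e)\approx 1$ is to be read as $f(e)\ge 1-\eta$ for a small $\eta$. We interpret ``high $\Delta\GED$ under a merge ordering'' as follows: in that ordering, $e$ is accepted as a merge at a step where its (perturbed) saving ranks first among the remaining feasible candidates. By \cref{thm:merge}, $e$ then contributes exactly $\Delta\GED=s(e)$, which is maximal among the feasible increments available at that step. ``Most orderings'' means a set of noise realizations of probability at least $1-\eta'$.

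The first step is concentration. By Hoeffding's inequality, with probability at least $1-2\exp(-2A t^2)$ we have $|f(e)-p(e)|\le t$. Hence $f(e)\ge 1-\eta$ implies $p(e)\ge 1-\eta-t$ with high confidence. The second step ties inclusion of a customer--customer arc $e=(i,j)$ to the merge loop. In pure CW such an arc enters $E_t$ only through an accepted merge. The merge loop scans candidates in descending $\tilde s$. The merge through $(i,j)$ is therefore accepted exactly when, at its turn, $i$ is a route tail, $j$ is a route head, capacity permits, and no earlier-ranked accepted merge has destroyed these endpoint conditions. Being accepted thus means that $(i,j)$ carries the largest perturbed increment among still-feasible merges at that moment. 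Applying \cref{thm:merge}, the realized GED increment equals $s(i,j)$. So on the event $\{e\in E_t\}$, whose probability is $p(e)\ge 1-\eta-t$, the edge is a maximal-$\Delta\GED$ greedy choice. This is the claimed ``stable backbone'' property, with $\eta'=\eta+t$. Depot arcs $(0,j)$ and $(i,0)$ are handled analogously: they are the arcs \emph{not} deleted by merges, and their persistence means no merge with larger increment claimed that route end.

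The second claim is the contrapositive reading. If $f(e)\le\eta$, then $p(e)\le\eta+t$, so $e$ appears only on a noise event of small probability. On that event its selection depends on the particular realization of $\varepsilon$ rather than on its unperturbed rank; this is what we take ``noise-dependent artifact'' to mean. Optionally, in the $\sigma\to 0$ limit (attempt $0$ of \cref{alg:solver}), such an edge is not in the deterministic CW solution whenever $p(e)<1$ fails to survive the unperturbed ordering.

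The main obstacle is the post-processing. 2-opt reversals and the T1/T2/T3 repair can create or remove arcs that no merge produced, so $\{e\in E_t\}$ is not identical to the event ``$e$ was accepted as a merge''. We would first bound the discrepancy by restricting $f$ to the pre-polish CW output. We would then argue that 2-opt changes only intra-route arcs with $\Delta C<0$, which by \cref{thm:main} is still a strict GED increase, so an arc introduced this way also has a positive local $\Delta\GED$. Arcs introduced by repair would be treated as an exceptional set, and the proposition would be stated modulo that set.
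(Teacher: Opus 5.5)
The paper gives no proof of \cref{prop:freq}. It is offered as an interpretation of \cref{thm:merge} (CW as greedy GED maximisation), supported only by the frequency statistics that follow it. Making it precise is the right instinct, but your formalization leaves a genuine gap at the central step.

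When the merge loop accepts $e=(i,j)$, what you know is that $e$ ranks first among the still-feasible candidates in the \emph{perturbed} order $\tilde s$ of \eqref{eq:noisy}. By \cref{thm:merge} the realized increment is the \emph{unperturbed} $s(e)$. Nothing makes $s(e)$ maximal among the feasible increments, since with $\sigma$ up to $0.5$ the two orders routinely disagree.

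More importantly, even the corrected property (``top perturbed candidate at its step'') holds for \emph{every} edge accepted in \emph{any} run, including edges with $f(e)\approx 0$. It also says nothing about the size of $s(e)$: late in the loop the top feasible candidate can have a tiny saving. So the hypothesis $f(e)\approx 1$ enters your argument only as $p(e)\approx 1$, and your conclusion is ``$e$ is included with high probability'', which is the hypothesis restated. The content the proposition actually asserts is never derived: that high frequency signals a large or robustly top-ranked saving, and so separates a backbone from artifacts.

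It also cannot be derived at this level of generality, as two examples show.
\begin{itemize}
\item If a customer $j$ has $d_j=Q$ (permitted by the instance definition) and all other demands are positive, every feasible solution contains the route $(0,j,0)$. So $(0,j)$ and $(j,0)$ have $f=1$, although every merge touching them is infeasible. Your clause for depot arcs (``no merge with larger increment claimed that route end'') is vacuous there.
\item Take two customers $i,j$ whose demands fit only with each other, placed on nearly opposite sides of the depot. The arc between them has $f\approx 1$ but $s(i,j)\approx 0$.
\end{itemize}

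The second claim fails for the same reason: nothing ties small $p(e)$ to dependence on $\varepsilon$. For example, attempt $0$ of \cref{alg:solver} uses $\sigma=0$. An arc produced only by the unperturbed ordering and destroyed by noise therefore has $f(e)=1/A$, yet it is the opposite of a noise artifact.

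What can actually be proved is a converse with explicit hypotheses. For instance, suppose $s(e)$ beats every other saving on arcs incident to $i$ or $j$ by a multiplicative margin and $d_i+d_j\le Q$. Then Gaussian tail bounds give a lower bound on the probability that the pre-polish CW output contains $e$. Otherwise the statement should be read as a heuristic interpretation, as the paper does explicitly for Heuristic Rule~\ref{cor:elim}.

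Two smaller points:
\begin{itemize}
\item Because $\sigma$ grows with the attempt index, the indicators are independent but not identically distributed. Hoeffding therefore concentrates $f(e)$ around the \emph{average} inclusion probability, not a single $p(e)$.
\item Computing $f$ on pre-polish output, with repair arcs set aside, changes the quantity the paper defines on the final solutions $E_t^{(a)}$. Also, the strict GED gain of a 2-opt move belongs to the move, not to any individual arc it introduces.
\end{itemize}
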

 
Empirically, on $n{=}199$ instances, only $1\%$ of edges have
$f(e) \geq 0.8$, while $60\%$ appear in fewer than $10\%$ of
restarts.  This extreme sparsity of the consensus set explains
why multi-start methods outperform single-solution refinement:
the stable backbone is too small to guide local search, but large
enough to bias construction when used as a savings multiplier
$\tilde{s}(i,j) = s(i,j) \cdot (1 + \beta\, f(i,j))$,
where $\beta>0$ is a tunable boost factor (we use $\beta$ to avoid
collision with the GED approximation ratio $\alpha$ of
\cref{thm:approximation}). The empirical benefit of this boost
remains an open question for future work; we present the formula
as a \emph{proposed extension}, not as a validated technique.
 
\subsection{From GED to Learned Edge Prediction}\label{sec:gnn}
 
The GED formulation reduces VRP to an edge selection problem:
find the feasible subgraph $G_t \subseteq G_s$ maximising
$\sum_{e \in E \setminus E_t} w(e)$.  This formulation is
naturally suited to graph neural networks (GNNs), which operate
on edge-level features and produce edge-level predictions.
 
\subsubsection{Edge Classification Formulation}
For each edge $e \in E$, define the binary label
$y(e) = \mathbf{1}[e \in E_t^*]$ indicating membership in the
optimal routing graph.  A GNN $g_\theta : (V, E, \mathbf{X}) \to [0,1]^{|E|}$
can be trained to predict $\hat{y}(e) = g_\theta(e)$ using the
GED-derived loss:
\begin{equation}\label{eq:gnn-loss}
\mathcal{L}(\theta) = \sum_{e \in E} w(e) \cdot
\ell\!\left(\hat{y}(e),\, y(e)\right),
\end{equation}
where $\ell$ is the binary cross-entropy and the weighting by $w(e)$
ensures that mispredicting high-weight edges incurs proportionally
higher loss directly reflecting the GED objective.
 
\subsubsection{GNN-Warm-Started Construction}
Given trained predictions $\hat{y}(e)$, the CW merge loop can be
modified to incorporate learned edge scores:
\begin{equation}\label{eq:gnn-savings}
s_{\mathrm{GNN}}(i,j) = s(i,j) \cdot \bigl(1 + \beta\,\hat{y}(i,j)\bigr),
\end{equation}
boosting the savings of edges predicted to be in $E_t^*$.
This replaces the statistical edge frequency signal
(\cref{sec:freq}) with a \emph{learned} signal that generalises
across instances.  Unlike frequency-based boosting, which requires
$A$ restarts on the \emph{current} instance, GNN predictions are
computed in a single forward pass and transfer to unseen instances
of similar structure.
 
\subsubsection{Feasibility via Degree Constraints}
The feasibility conditions of \cref{def:feasible-characterization},$\deg^+(0){=}\deg^-(0){=}m$
and $\deg^+(i){=}\deg^-(i){=}1$ for all customers, are linear constraints on edge
predictions.  They can be enforced either as a differentiable
penalty during training or as a post-hoc projection step that rounds
$\hat{y}(e)$ to a feasible binary assignment.

The GNN-based edge prediction framework described in this section is a research direction enabled by the GED formulation. No GNN experiments are included in this paper; establishing whether the GED-weighted loss and sparsity signal yield practical improvements over existing edge-classification labels (such as those used in~\cite{kool2019attention,joshi2020learning}) remains an open empirical question for future work.

\subsection{Approximation-Aware Algorithm Design}
 
\Cref{thm:approximation} establishes that a GED $\alpha$-approximation
yields a VRP solution with cost
$C(\hat{R}) \leq (1{-}\alpha)W_{\mathrm{total}} + \alpha\, C(R^*)$.
This bound has a practical consequence: any algorithm that
\emph{guarantees} $\GED(G_s, \hat{G}_t) \geq \alpha \cdot \GED^*$
automatically provides a VRP cost guarantee.
 
The extensive GED approximation literature, like bipartite matching
\cite{fankhauser2011speeding}, beam search on edit paths \cite{neuhaus2006fast},
and neural GED estimators \cite{bai2020simgnn,li2019graph}, thus becomes
directly applicable to VRP.  In particular, bipartite GED
approximations run in $O(n^3)$ and provide known approximation
ratios, which \cref{thm:approximation} translates into VRP cost bounds.
Exploring these connections is a promising direction for future work.

\section{Related Work}\label{sec:related}

\subsection{Vehicle Routing Problem}

The VRP was introduced by Dantzig and Ramser~\cite{dantzig1959truck} as the ``Truck Dispatching
Problem'' and has been studied extensively for over six decades. Exact methods include
branch-and-cut~\cite{gelinas1995branching} and branch-and-price~\cite{desrochers1988vehicle}, which
remain effective for small to medium instances. Among construction heuristics, the Clarke--Wright
savings algorithm~\cite{clarke1964scheduling} is the most relevant to this work: it builds routes by
iteratively merging tours in descending savings order, and Theorem~2 establishes that each such merge
is exactly a per-merge GED increment. We use it as the baseline construction heuristic throughout
Section~VII.

For improvement, metaheuristics such as tabu search~\cite{gendreau1994tabu}, adaptive large
neighborhood search (ALNS)~\cite{ropke2006adaptive}, and genetic
algorithms~\cite{potvin1996genetic} achieve high-quality solutions on large instances. The
Lin--Kernighan--Helsgaun heuristic (LKH-3)~\cite{helsgaun2000effective,helsgaun2017extension} is the
representative high-performance solver for CVRP and related variants, cited here as a reference for
solution quality. Learning-based approaches---pointer networks~\cite{vinyals2015pointer}, attention
models~\cite{kool2019attention}, and graph neural networks
(GNNs)~\cite{joshi2020learning,nazari2018reinforcement}---have been applied to route construction or
improvement. These are directly relevant to this work: the GED reformulation provides a natural
per-edge supervision signal (Equation~(23)) and a well-posed loss function for GNN-based edge
prediction, bridging the gap between the GNN literature and combinatorial VRP. Evolutionary and
swarm-based methods have also been applied, including hybrid genetic algorithms for reconfigurable
networks~\cite{akopov2025} and particle swarm optimization for clustered
routing~\cite{islam2021}.

\subsection{Classical VRP Formulations}

The CVRP formulation originates with Dantzig and Ramser~\cite{dantzig1959truck}. Modern taxonomies of
mathematical programming structures for CVRP are reviewed in Toth and Vigo~\cite{toth2002models},
which categorises edge-flow and vehicle-flow formulations by index aggregation, commodity management,
and bounding properties. These classical formulations are the backdrop against which the GED
reformulation is contrasted: all of them optimise over route sequences, making per-edge decomposition
of the objective unavailable without non-trivial post-processing.

\smallskip
\noindent\textbf{Two-index vehicle flow.}
The canonical two-index edge-flow formulation, advanced by Laporte and
Nobert~\cite{laporte1987exact}, uses a binary variable $x_{ij}$ indicating whether edge $(i,j)$ is
traversed by any vehicle in the fleet, avoiding vehicle-index symmetries. Subtour elimination and
capacity constraints are enforced via capacity cut constraints (CCCs), which yield tight LP relaxations
and underpin branch-and-cut algorithms despite growing exponentially in number.

\smallskip
\noindent\textbf{Three-index vehicle flow.}
Multi-index formulations append an explicit vehicle index $k$ to decision variables, yielding
$x_{ijk} \in \{0,1\}$. This supports heterogeneous fleets but introduces permutation symmetry
in the solution space, causing severe scaling limitations in branch-and-bound engines.

\smallskip
\noindent\textbf{MTZ constraints.}
Miller, Tucker, and Zemlin~\cite{miller1960integer} proposed polynomial-sized subtour elimination via
continuous auxiliary variables $u_i$ representing cumulative load:
\begin{equation*}
  u_i - u_j + C x_{ij} \leq C - q_j \quad \forall (i,j) \in A,\; i,j \neq 0.
\end{equation*}
While this reduces the constraint count to $O(|V|^2)$, the big-$M$ structure yields weak LP
relaxations.

\smallskip
\noindent\textbf{Commodity flow.}
Gavish and Graves~\cite{gavish1981scheduling} pioneered commodity flow formulations that treat residual
vehicle capacity as a continuous network flow, producing significantly tighter LP bounds than MTZ
without requiring exponential constraint separation.

\smallskip
These four formulations establish the standard landscape of VRP mathematical programming. The GED
reformulation presented in Section~III differs from all of them: its objective is edge-additive,
enabling per-edge quality attribution and direct application of GED approximation bounds.

\subsection{Graph Edit Distance}

GED is a classic method for error-tolerant graph matching~\cite{sanfeliu1983distance}, defined as
the minimum-cost sequence of node/edge insertions, deletions, and substitutions that transforms one
graph into another. Exact GED computation is NP-hard and is typically solved via A* search or integer
programming~\cite{neuhaus2006fast,zeng2009comparing}. Approximation methods based on bipartite
matching~\cite{fankhauser2011speeding,neuhaus2006fast} trade optimality for polynomial runtime.
Efficient exact and approximate approaches exploiting parallel and GPU architectures have also been
developed~\cite{dabah2022parallel,dabah2021approximate,dabah2026gpu}.

Learning-based GED approximations using GNNs such as SimGNN~\cite{bai2020simgnn} and subsequent
architectures~\cite{li2019graph,piao2022learning}, learn graph similarity end-to-end and are directly
relevant to Section~VIII-C, where we identify the GED formulation as a natural basis for GNN-based
edge prediction in VRP.

Unlike prior GED research, which treats edit distance as a similarity measure to be minimized or
approximated between two given graphs, this work reformulates a combinatorial optimization problem
(VRP) as a GED \emph{maximization} problem relative to a complete source graph. This shift enables
the application of known GED approximation algorithms~\cite{fankhauser2011speeding,neuhaus2006fast} and k-best search~\cite{dabah2021approximate} to VRP with practical structural analyses metrics.

\section{Conclusion}\label{sec:conclusion}

We have shown that the Vehicle Routing Problem can be reformulated as a Graph Edit Distance maximization problem. Under a simple edge‑deletion cost model, minimizing route cost is equivalent to maximizing the total weight of edges deleted from the complete instance graph. This reformulation yields three theoretical results: a formal equivalence (\cref{thm:main}), a merge‑decomposition theorem linking Clarke‑Wright savings to per‑merge GED increments (\cref{thm:merge}), and an approximation‑transfer theorem that translates GED approximation ratios into VRP cost bounds (\cref{thm:approximation}).

The power of this reformulation is demonstrated through structural analyses that are difficult to obtain from the classical cost‑minimization perspective. Validated on 90 benchmark instances with proven optimal solutions, we establish three empirical facts:  
(i) optimal routing graphs use only 5.5\% of available edges;  
(ii) 3.0\% of optimal edges were not found by Clarke‑Wright under the tested restarts, revealing an empirical limitation of greedy methods; and  
(iii) across all heuristic solutions, 94.3\% of missed optimal edges are assignment errors.

These edge‑level insights define potential targets for future work: developing GNN models that predict the $\sim$5.5\% of edges belonging to $E_t^*$ using the GED‑weighted loss of \cref{eq:gnn-loss}; exploiting the GED approximation literature (bipartite matching, beam search on edit paths) to obtain VRP solutions with formal cost guarantees via \cref{thm:approximation}; and extending the formulation to VRPTW and stochastic VRP variants through the constraint‑encoding framework.

Finally, the proposed GED formulation does not introduce additional computational overhead and could be incorporated into existing VRP heuristics as an evaluation or guidance layer without modifying their core structure; demonstrating an end-to-end integration in which GED-derived signals measurably improve a solver is left for future work.
\section*{Data Availability}
Code will be released upon acceptance.
\bibliographystyle{IEEEtran}
\bibliography{ref}
\appendix

\section{VRP Constraints Encoding in GED}\label{sec:constraint-theory}

\begin{definition}[Constrained GED]\label{def:constrained-ged}
Given a set of feasibility constraints \(\mathcal{C}\), the constrained GED is:
\begin{equation}
\GED_{\mathcal{C}}(G_s, G_t) = \sum_{e \in E \setminus E_t} w(e) + \sum_{e \in E_t} \phi_{\mathcal{C}}(e)
\end{equation}
where \(\phi_{\mathcal{C}}: E \to \mathbb{R}^+ \cup \{\infty\}\) is a penalty function that is \(0\) if \(e\) belongs to some feasible routing graph satisfying \(\mathcal{C}\) and \(\infty\) otherwise, with intermediate values for partial violations.
\end{definition}

\subsection{Capacity Constraints}

\begin{definition}[Capacity Violation Indicator]\label{def:capacity-violation}
For an edge \(e = (i,j)\), define the capacity violation indicator:
\begin{equation}
\mathbb{1}_{\text{cap}}(e) = 
\begin{cases}
1 & \text{if including } e \text{ in a route would make it} \\
& \text{  infeasible given the remaining vertices} \\
0 & \text{otherwise}
\end{cases}
\end{equation}
\end{definition}

\begin{theorem}[Capacity Encoding]\label{thm:capacity}
There exists a penalty function \(\lambda: E \to \mathbb{R}^+\) such that:
\[
\min_{\substack{G_t \text{ degree-feasible} \\ \text{capacity constraints}}} \GED(G_s, G_t) = \]
\[
\min_{G_t \text{ degree-feasible}} \left( \GED(G_s, G_t) + \sum_{e \in E_t} \lambda(e) \right)
\]
where \(\lambda(e)\) can be chosen as \(\lambda(e) = M \cdot \mathbb{1}_{\text{cap}}(e)\) for sufficiently large \(M > \wt\).
\end{theorem}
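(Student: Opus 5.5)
The plan is a standard big-$M$ exact-penalty argument, comparing the two sides over the common set of degree-feasible routing graphs, i.e.\ subgraphs satisfying conditions (1)--(3) of \cref{def:feasible-characterization}. By \cref{prop:uncapacitated}, this is exactly the feasible set of the uncapacitated problem. The capacity-feasible graphs $\F$ form a subset of it. I assume the instance is feasible, so $\F\neq\emptyset$, as is implicit in \cref{thm:main}. Write $L$ for the left-hand minimum and $P(G_t)=\GED(G_s,G_t)+\sum_{e\in E_t}\lambda(e)$ for the penalized objective.

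\textbf{Key property of the indicator.} First I would pin down \cref{def:capacity-violation} so that it has the following property: for a degree-feasible $G_t$,
\[
\sum_{e\in E_t}\mathbb{1}_{\text{cap}}(e)=0 \iff G_t\in\F .
\]
The informal wording (``would make the route infeasible given the remaining vertices'') is context-dependent. I would make it precise by declaring $\mathbb{1}_{\text{cap}}(e)=1$ when $e$ lies on a cycle of $G_t$ whose customer demand exceeds $Q$. For definiteness one can flag only the closing arc $(i,0)$ of such a cycle. With this choice, $\lambda$ depends on the cycle containing $e$, which is the sense in which the theorem is meant.

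Getting this property is the step I expect to be the real obstacle. A purely arc-local indicator cannot detect an overloaded cycle in general, because the same arc can lie on both a feasible and an infeasible cycle. So the proof must either accept this cycle-dependent reading or restrict the claim to arcs that belong to no capacity-feasible routing graph (the $\phi_{\mathcal C}$ reading of \cref{def:constrained-ged}). If the latter is used, equality of the optima still holds, since all such arcs are absent from $\F$.

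\textbf{Direction $\ge$.} Given the property, take any $G_t\in\F$. All penalties vanish, so $P(G_t)=\GED(G_s,G_t)$. Taking the minimum over $\F$ shows that the right-hand minimum is at most $L$.

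\textbf{Direction $\le$.} Take a degree-feasible $G_t\notin\F$. At least one arc of $E_t$ is flagged, so $P(G_t)\ge \GED(G_s,G_t)+M> M>\wt$. On the other hand, every $G'\in\F$ satisfies $0\le \GED(G_s,G')=\sum_{e\in E\setminus E'}w(e)\le \wt$, using \cref{lem:ged-simple} and nonnegativity of $w$. Hence $L\le\wt<P(G_t)$, so no capacity-infeasible graph can attain the right-hand minimum. The right-hand minimum is therefore at least $L$.

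\textbf{Conclusion.} The two optima coincide, and the minimizers coincide as well; this is the analogue of \cref{cor:solution-set}. The same argument, with the penalty subtracted, also covers the max-GED form used in \cref{thm:main}, since $M>\wt$ again separates feasible from infeasible graphs.
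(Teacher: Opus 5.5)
Your main argument is the paper's big-$M$ separation ($M>\wt\ge\GED(G_s,G')$ for every $G'$), and under your cycle-dependent reading of the indicator it is correct. It is also more careful than the paper's proof, which asserts without justification that every capacity-violating degree-feasible graph contains a flagged arc and leaves implicit that feasible graphs pay no penalty. The genuine error is your fallback. Suppose the indicator is arc-local and flags an arc exactly when it lies on no capacity-feasible routing graph (the $\phi_{\mathcal C}$ reading). Then ``all such arcs are absent from $\F$'' gives only the $\ge$ direction, because penalties vanish on $\F$. The $\le$ direction needs every overloaded degree-feasible graph to carry a flagged arc. That fails for exactly the reason you stated one sentence earlier: an overloaded graph can be assembled entirely from arcs that each lie on some feasible routing graph, so it pays nothing.

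Here is a concrete instance. Take customers $a,b,c,d$ with demand $5$ each, $Q=10$ and $m=2$. Set $w(0,x)=1$ for $x\in\{a,b,c\}$, $w(0,d)=3$, and $w(x,y)=2$ for distinct customers; this is a metric.
\begin{itemize}
\item The feasible routing graphs are exactly the $2+2$ splits. Since $d$ is always paired with one of $a,b,c$, each costs $4+6=10$, so $L=\wt-10$.
\item The degree-feasible graph $0\to a\to b\to c\to 0$, $0\to d\to 0$ carries load $15$ on its first cycle and costs $12$.
\item Yet every arc of that graph lies on a feasible routing graph: $(0,a),(a,b)$ on $0\to a\to b\to 0$; $(b,c),(c,0)$ on $0\to b\to c\to 0$; $(0,d)$ on $0\to d\to a\to 0$; and $(d,0)$ on $0\to a\to d\to 0$.
\item It therefore receives zero penalty, so the right-hand minimum is at most $\wt-12<L$.
\end{itemize}
The same instance breaks the max-GED form: $0\to a\to d\to b\to 0$, $0\to c\to 0$ costs $8<10$, and again every arc lies on a feasible routing graph. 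This also shows that the arc-local $\phi_{\mathcal C}$ of the constrained-GED definition does not by itself enforce capacity.

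So drop that alternative. The ``$\lambda=M$ times the capacity indicator'' form genuinely needs a penalty that depends on $G_t$, not only on $e$. If you insist on a literal $\lambda:E\to\mathbb{R}^+$, the existence claim holds only trivially. Put $\lambda=0$ on the arcs of one optimal feasible graph and $\lambda=M$ elsewhere. Every degree-feasible graph has exactly $n+m$ arcs, so any other such graph contains an arc outside the optimal one and pays at least $M$. This does not vindicate any natural arc-local capacity indicator.
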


\begin{proof}
For any degree-feasible graph \(G_t\) that violates capacity constraints, there exists at least one edge \(e \in E_t\) with \(\mathbb{1}_{\text{cap}}(e) = 1\). The penalty term adds at least \(M\) to the objective. Since \(\GED(G_s, G_t) \leq \wt\) and \(M > \wt\), any capacity-violating graph has strictly larger penalized GED than any feasible graph. Therefore, the minimizer of the penalized objective must satisfy capacity constraints.
\end{proof}

\subsection{Time Windows}

\begin{definition}[Time Window Violation]\label{def:time-violation}
For an edge \(e = (i,j)\) with time windows \([a_i, b_i]\) and \([a_j, b_j]\), and travel time \(t(i,j)\), define:
\begin{equation}
\tau(e) = \max(0, a_j - (b_i + t(i,j)))
\end{equation}
as the minimum waiting time required to make the edge feasible.
\end{definition}

\begin{proposition}[Time Window Penalty]\label{prop:time}
The time window constraint can be encoded by modifying deletion costs:
\begin{equation}
c_{\text{del}}'(e) = w(e) + \mu \cdot \tau(e)
\end{equation}
where \(\mu > 0\) is a scaling parameter. As \(\mu \to \infty\), the penalized formulation approaches the hard constraint case.
\end{proposition}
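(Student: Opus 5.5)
The plan is to reduce the modified-deletion-cost GED to a classical penalty method on the kept edges, and then use finiteness of $\F$ to show that the penalty becomes exact once $\mu$ exceeds an explicit threshold. The only inputs are \cref{lem:ged-simple}, the constant-total-weight identity \eqref{eq:total-weight}, and the correspondence of \cref{lem:correspondence}.

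\textbf{Step 1 (rewrite the modified objective).} Set $\GED'(G_s,G_t)=\sum_{e\in E\setminus E_t}c_{\text{del}}'(e)$. Insertions still cost zero, so this is the analogue of \cref{lem:ged-simple}. Define the modified total weight $\wt'=\wt+\mu T$, where $T=\sum_{e\in E}\tau(e)$. Exactly as in \eqref{eq:total-weight}--\eqref{eq:core-relation},
\[
\wt'-\GED'(G_s,G_t)=\sum_{e\in E_t}w(e)+\mu\sum_{e\in E_t}\tau(e)=C(\mathcal{R})+\mu\,P(G_t),
\]
where $P(G_t)=\sum_{e\in E_t}\tau(e)\ge 0$. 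Since $\wt'$ does not depend on $G_t$, maximizing $\GED'$ over $\F$ is the same as minimizing the penalized cost $C_\mu(G_t)=C(\mathcal{R})+\mu P(G_t)$. This parallels Part~3 of the proof of \cref{thm:main}.

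\textbf{Step 2 (identify the hard-constraint problem).} The hard problem is the minimization of $C$ over $\F_0=\{G_t\in\F: P(G_t)=0\}$, that is, over routing graphs that use no edge with $\tau(e)>0$. I will assume $\F_0\neq\emptyset$; otherwise the hard problem is infeasible and the statement is vacuous. Since $\tau\ge 0$, $P(G_t)=0$ holds exactly when every kept arc has $\tau(e)=0$. For every $\mu$, $C_\mu$ agrees with $C$ on $\F_0$.

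\textbf{Step 3 (exactness for large $\mu$).} The set $\F$ is finite. So if $\F\setminus\F_0\neq\emptyset$, the quantity $\delta=\min_{G_t\in\F\setminus\F_0}P(G_t)$ is strictly positive, and the spread $\Delta=\max_{\F}C-\min_{\F}C$ is finite. Take $\mu>\Delta/\delta$ and any $G_t\notin\F_0$. Then
\[
C_\mu(G_t)\ge \min_{\F}C+\mu\delta>\max_{\F}C\ge\min_{\F_0}C.
\]
Hence every maximizer of $\GED'$ lies in $\F_0$. Since $C_\mu=C$ on $\F_0$, the maximizers of $\GED'$ coincide with the hard-constrained optima, and the optimal values agree up to the constant $\wt'$.

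This gives more than the limit statement: the penalty is exact for all $\mu$ beyond a finite threshold. In particular $\min_{\F} C_\mu$ is nondecreasing in $\mu$, bounded by $\min_{\F_0}C$, and equal to it once $\mu>\Delta/\delta$. If $\F\setminus\F_0=\emptyset$, the claim is immediate for every $\mu$.

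\textbf{Main obstacle: fixing the meaning of ``the hard constraint case''.} The quantity $\tau(e)$ is a pairwise, edge-local test. It uses $b_i$ rather than the actual arrival time at $i$, so $P(G_t)=0$ is only a necessary condition for route-level time-window feasibility, which depends on cumulative arrival times along each cycle. I would therefore state and prove the proposition with ``hard constraint'' meaning the edge-level condition $\tau(e)=0$ on every kept arc. A remark would note that full VRPTW feasibility cannot be captured by any edge-additive deletion cost. It would need a route-dependent penalty, as in the constrained GED of \cref{def:constrained-ged}.
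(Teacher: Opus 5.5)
The paper states \cref{prop:time} without any proof, so there is no argument of its own to compare against. Your proposal is correct and gives a rigorous version of the claim.

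The identity $\wt'-\GED'(G_s,G_t)=C(\mathcal{R})+\mu P(G_t)$ is right. It shows that raising the deletion cost of arcs with $\tau(e)>0$ pushes them out of $E_t$ in the maximization sense of \cref{thm:main}. The paper's closest analogue, the Capacity Encoding theorem (\cref{thm:capacity}), is written as a \emph{minimization} of GED with penalties on kept arcs, which is the opposite sense, so your formulation is the consistent one.

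Your Step~3 is the graded counterpart of the big-$M$ argument used there. Since $\tau$ is real-valued rather than a $0/1$ indicator, you correctly work with $\delta>0$ over the finite set $\F\setminus\F_0$. You also obtain more than the stated limit: the penalty is exact for every $\mu$ past a finite threshold. That threshold can be made explicit without enumerating $\F$:
\begin{itemize}
\item every $G_t\notin\F_0$ keeps an arc with $\tau(e)>0$, so $\delta\ge\tau_{\min}:=\min\{\tau(e):\tau(e)>0\}$;
\item $0\le C\le\wt$ on $\F$, so $\Delta\le\wt$.
\end{itemize}
Hence $\mu\,\tau_{\min}>\wt$ suffices, which is precisely the paper's condition $M>\wt$.

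Your caveat about what ``the hard constraint case'' means is warranted, but your remark slightly misstates it.
\begin{itemize}
\item \emph{Not necessary.} Under the usual VRPTW convention, arriving before $a_j$ just means waiting, and the paper itself calls $\tau(e)$ a minimum waiting time. So an arc with $\tau(e)>0$ can lie on a perfectly feasible route. Thus $P(G_t)=0$ is not a necessary condition for route-level feasibility; it is necessary only in a no-wait model with zero service times.
\item \emph{Not sufficient, even arc by arc.} Suppose the arc is infeasible because even the earliest departure from $i$ arrives late, $a_i+t(i,j)>b_j$. Then $b_i+t(i,j)\ge a_i+t(i,j)>b_j\ge a_j$, so $\tau(e)=0$ and the penalty never fires on it.
\end{itemize}
So the edge-level reading $\F_0=\{G_t\in\F:P(G_t)=0\}$ that you adopt is the only one under which the proposition is a theorem. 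Your closing remark should drop the ``only a necessary condition'' qualifier. Keep its real point: genuine VRPTW feasibility requires a route-dependent penalty.
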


\subsection{Heterogeneous Fleet}

\begin{definition}[Heterogeneous GED]\label{def:heterogeneous}
For a fleet with vehicle-specific cost matrices \(w_k\) for \(k = 1,\ldots,m\), define:
\begin{equation}
\GED_{\text{hetero}}(G_s, G_t) = \sum_{k=1}^m \sum_{e \in E \setminus E_t^k} w_k(e)
\end{equation}
where \(E_t^k\) are edges assigned to vehicle \(k\), with \(\bigcup_{k=1}^m E_t^k = E_t\) and \(E_t^k \cap E_t^j = \emptyset\) for \(k \neq j\).
\end{definition}
\section*{Acknowledgment}
The author used AI-based writing assistance to enhance the clarity, grammar, and presentation of this manuscript.
\begin{IEEEbiography}
[{\includegraphics[width=1in,height=1.25in,clip,trim=500 0 500 0,keepaspectratio]{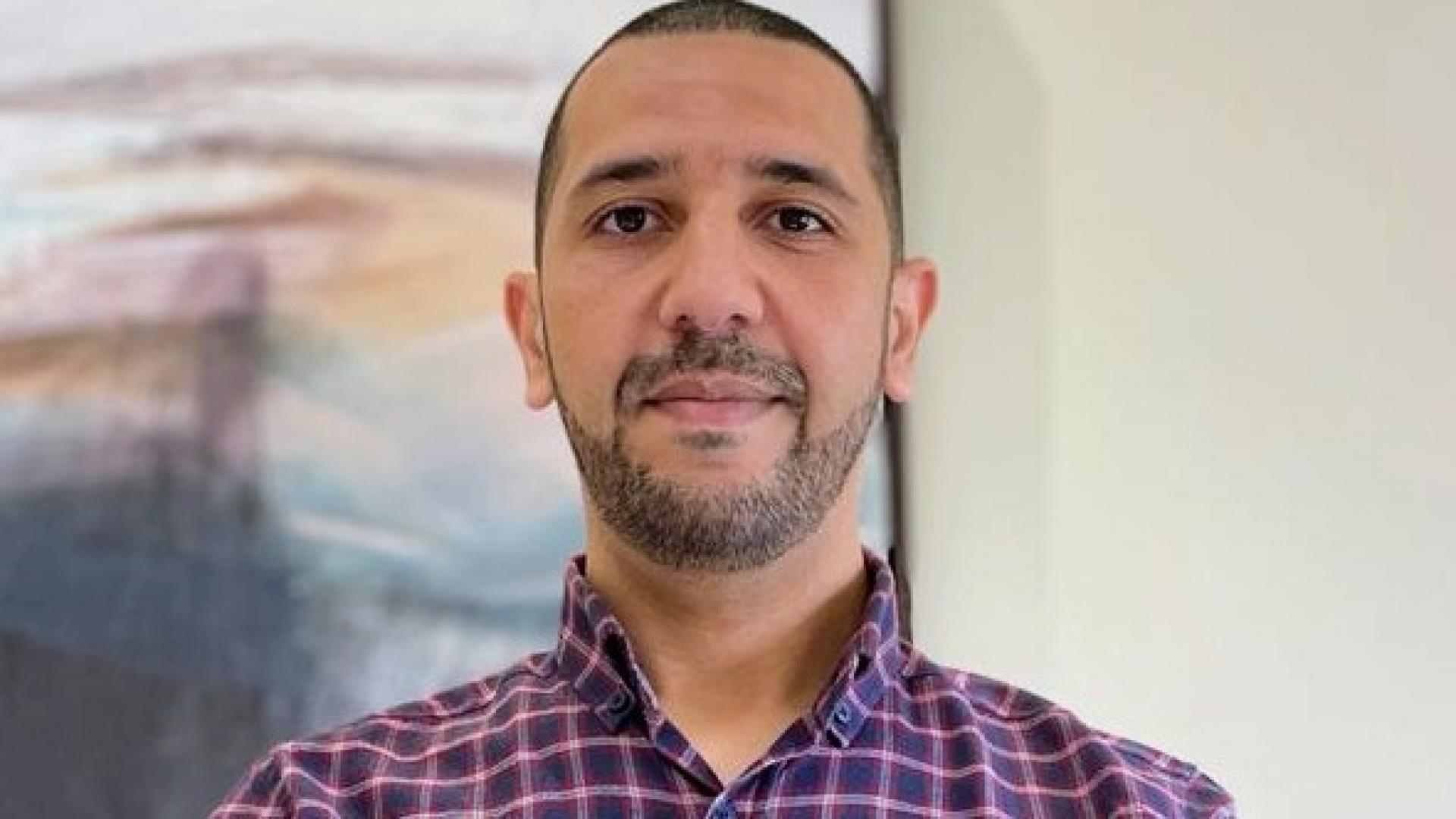}}]
{Dr. Adel Dabah}received the B.S. degree in Computer Science from USTHB University, Algeria, in 2010, and the M.S. and Ph.D. degrees in Computer Science from USTHB University, 2013 and 2018, respectively. He is currently a Research Scientist in the Accelerating Devices Lab at Julich Supercomputing Centre (JSC), Germany.  His research interests include High-Performance Computing, Scheduling Problems, Graph Edit Distance,  Discrete algorithms, and software development for scientific applications.
\end{IEEEbiography}
\EOD
\end{document}